\newcommand{\cre}{\textcolor{black}}
\def\CwedgeX{\mbox{$\bigwedge^{2}({\mathbb C}^{4})$}}
\newcommand{\rd}{\mathrm{d}}            % Roman d for differential
\newcommand{\ri}{\mathrm{i}}            % Roman i for imaginary number
\newcommand{\hy}{{\widehat{y}}}
\newcommand{\hw}{{\widehat{w}}}
\newcommand{\op}{{\overline{p}}}
\newcommand{\oq}{{\overline{q}}}
\newcommand{\Le}{{\rm{Le}}}
\newcommand{\eps}{\varepsilon}
\newcommand{\realpart}[1]{\operatorname{\sf Re}\left(#1\right)}
\newcommand{\impart}[1]{\operatorname{\sf Im}\left(#1\right)}
\newcommand{\ww}{{hu+(1-h)(1-y)}}
\newcommand{\eq}[2]{\begin{equation}\begin{split}#1\end{split}\label{#2}\end{equation}}
\newcommand{\eqnn}[1]{\begin{equation}\begin{split}#1\end{split}\nonumber\end{equation}}
\newcommand{\iR}[1]{\int_{\mathbb{R}} #1 \,{\mbox{d}}\xi}
\newtheorem{theorem}{Theorem}
\numberwithin{theorem}{section}
\newtheorem{lemma}[theorem]{Lemma}
\theoremstyle{definition}
\begin{document}
{\textcolor{red}{\title{\cre{Combustion  waves in hydraulically resistant porous media in a special parameter regime}}}}
\date{\today}

 \begin{abstract}{
 In this paper we study the stability of fronts  in a reduction of a well-known  PDE system that is used to model the combustion in hydraulically resistant porous media. More precisely, we consider the original  PDE system under the assumption that  one of the parameters of the model,  the Lewis number, is  chosen in a specific way and with initial conditions of a specific form.  {For a class of initial conditions, then the  number of unknown functions is  reduced from three to two. For the reduced system, the existence of combustion fronts follows from the existence results for the original system. The stability of these fronts is studied here by a combination of energy estimates and numerical Evans function computations and nonlinear analysis when applicable.  We then lift {\cre{the restriction}} on the initial conditions and show that the stability results obtained for the reduced system extend to the fronts in the full system considered for that specific value of the Lewis number.
 The fronts that we {investigate} are  proved to be either  absolutely unstable or   convectively unstable on the nonlinear level.}
}
\end{abstract}

%\author{{\sc A. Ghazaryan, S. Lafortune and P. McLarnan}}
%\author{ A. Ghazaryan\thanks{Department of Mathematics, Miami University, Oxford, OH 45056, email:  \texttt{ghazarar@miamioh.edu}, \texttt{mclarnpc@miamioh.edu}}, \and S. Lafortune\thanks{Department of Mathematics, College of Charleston, Charleston, SC 29424,  email: \texttt{lafortunes@cofc.edu}}, \and P. McLarnan\footnotemark[2] }
\maketitle

\begin{center} 
Anna Ghazaryan \textsuperscript{a},  St\'ephane Lafortune \textsuperscript{b},  and Peter McLarnan\textsuperscript{a}\\
 \end{center}
 \textsuperscript{a} \address{ Department of Mathematics, Miami University, 301 S. Patterson Ave,   Oxford, OH 45056, USA,  Ph. 1-513-529-0582,}  
 \email{ghazarar@miamioh.edu}; \email{peterm@cs.earlham.edu} \par
 \textsuperscript{b} \address{ Department of Mathematics, College of Charleston, Charleston, SC 29424, USA, Ph.  1-843-953-5869},  \email{lafortunes@cofc.edu} \par
% \textsuperscript{c} \address{ Department of Mathematics, Miami University, 301 S. Patterson Ave,   Oxford, OH 45056, USA,  Ph. 1-513-529-0582,}  \email{peterm@cs.earlham.edu}

\keywords {\textbf{Keywords}: combustion modeling, traveling  front, stability, Evans function,  marginally stable spectrum, nonlinear orbital stability, weighted norms. }

\textbf{AMS Classification:}  
35K57, % (Reaction-diffusion equations), 
35B32, % (bifurcations),
35B36, %(Pattern formation), 
80A25,  %(combustion), 
35B40. %(Asymptotic behavior of solutions).

\section{Introduction}

%\section{System}

Sivashinsky and his collaborators in \cite{BGSS}  proposed a model
describing combustion in inert porous media under condition of high hydraulic resistance, 
\eq{ 
T_t-(1-\gamma^{-1})P_t&=\epsilon T_{xx}+YF(T),  \\
P_t-T_t&=P_{xx}, \\
Y_t&=\epsilon {\rm{Le}}^{-1} Y_{xx}-\gamma YF(T), }{phys0}
where $Y$ is the   scaled  concentration of the reactant in the reaction zone, $P$ is the pressure, and 
$T$ is the temperature. The specific heat ratio
$\gamma>1$, the Lewis number ${\rm {Le}}>0$, and    the ratio of
pressure and molecular diffusivities  $\epsilon >0$ are physical characteristics of the fuel. The reaction rate  $YF(T)$ 
may or may not have an 
ignition cut-off, that is, $F(T) =0$   on an interval
$[0,T_{ign}] $ and  $F(T) >0$   and increasing for $T>T_{ign}$.
%\begin{equation}
%\label{general}
%F(T)=0 \quad \mbox{for} \quad 0\le T < T_{ign} <1, \quad
%\end{equation}
For   $F(T)$ Lipschitz continuity is assumed everywhere,  except {\cre{at}} the ignition temperature $T=T_{ign}$. 
Papers  \cite{BGSS,GR,BKS} contain detailed explanations and  the deduction of  this system.
%We only  mention that the first and the third equations of the system (\ref{phys0}) represent  the partially linearized conservation equations for energy and deficient reactant, while the second one is a linearized continuity  equation combined with Darcy's law. 
%The original system of physical laws and the deduction of \eqref{phys0} can be found in \cite{GR}.

  %For the process of initiation of combustion,
In  \cite{Gordon_review} it is suggested to  considered  this system  with initial conditions
\begin{equation*}T(0,x)=T_0(x), \quad Y(0,x)=1, \quad P(0,x)=0. \label{PDEic}\end{equation*}
It is also assumed that  $\epsilon$  is significantly smaller than other parameters, therefore  a simplification of   %sometimes it is  treated as a small parameter, and 
 \eqref{phys0} is offered in the literature which is obtained  by setting $\epsilon=0$, 
\eq{ 
T_t-(1-\gamma^{-1})P_t&=YF(T),  \\
P_t-T_t&=P_{xx}, \\
Y_t&=-\gamma YF(T). }{phys00}
We point out that  the terms $\epsilon T_{xx}$ and   $\epsilon {\rm Le}^{-1} Y_{xx}$ in (\ref{phys0})  are  singular perturbations, so  results obtained for the system \eqref{phys00} % and as such should  % results obtained for the reduction \eqref{phys00}
should  be dealt with caution, as singularly  perturbed systems in general may  support {\cre{a}} behavior significantly different than one observed in  the limiting system.

This paper is devoted   to the study of   the stability of traveling fronts in the system  \eqref{phys0}.
Traveling fronts  are solutions  of the underlying PDE \eqref{phys0}   that have a 
form $T(x,t)=T(\xi)$, $P(x,t)=P(\xi)$, $Y(x,t)=Y(\xi)$, with $\xi=x-ct$, where  $c$
is the a priori unknown front speed, and that asymptotically connect distinct  equilibria at $\pm \infty$.    These  solutions are sought as  solutions of the traveling wave ODEs 
\begin{eqnarray}
\label{systE}
-cT^{\prime}+c(1-\gamma^{-1})P^{\prime}&=&
\epsilon T^{\prime\prime}+YF(T), \nonumber \\
P^{\prime\prime}&=&c(T^{\prime}-P^{\prime}), \\
cY^{\prime}+\epsilon Y^{\prime\prime}&=&\gamma YF(T), \nonumber
\end{eqnarray}
that satisfy boundary-like conditions at $\pm \infty$ which we describe below.

Generally speaking, the equilibria of the system \eqref{systE} are states where $YF(T)=0$, so they can be described as  the states where there is no fuel $Y=0$, or as the cold states $T<T_{ign}$.  From physical considerations two of the equilibria are of interest, the completely burnt state
$P=1$, $ T=1$, $Y=0$,  and  the unburnt state where all of the fuel is present
$T=0$, $ P=0$, $Y=1$.  So we consider the system  \eqref{systE} together with the boundary conditions 
\eqnn{
P(-\infty)=1,\quad T(-\infty)=1, \quad Y(-\infty)=0, \\
T (+\infty)=0, \quad P(+\infty)=0,\quad Y(+\infty)=1. 
}%{bc1}

The existence  and uniqueness of  fronts   in  the system   (\ref{phys00}) has already  been established in  \cite{Dkh}, under  the assumption on the parameters  $0<T_{ign}<1-\gamma^{-1}.$
%More precisely, inthe existence and uniqueness of  solutions of  
% is proved using  Leray-Schauder degree theory and estimates on the speed $c$ are obtained.  
Moreover, it is already known  \cite{GKS},  that  the front  in  \eqref{phys00} that   satisfies the boundary conditions above is unique, up to translation.    %The  relation of the fronts in the systems (\ref{phys0}) tand    (\ref{phys00}) has also been investigated. 
In  \cite{GR} it is shown that  as  $\epsilon $ approaches  $0$, the  $\epsilon$-dependent fronts  in (\ref{systE}) converge   to the fronts  of 
 \begin{eqnarray}
\label{syst0}
-cT^{\prime}+c(1-\gamma^{-1})P^{\prime}&=&
YF(T), \nonumber \\
P^{\prime\prime}&=&c(T^{\prime}-P^{\prime}), \\
cY^{\prime}&=&\gamma YF(T). \nonumber
\end{eqnarray}
 It is also known  \cite{GGJ}  that  the solution in the limiting system (\ref{syst0}) persists as a unique solution of speed of order $\rm O(1)$ of  the system (\ref{systE}) with $0<\epsilon\ll 1$.  We note that in \cite{GGJ} it is assumed that $\Le=1$, but this assumption can be removed because  it does not affect   the  result of the paper  in any way.

The stability of fronts in the system  \eqref{phys00}  has  been addressed  in \cite{GLM}.  There an Evans function approach was used to find parameter regimes   where the front  is absolutely unstable. In other words, it was shown that there are parameter regimes where small perturbations to the front   grow exponentially fast,  in the co-moving frame. More importantly, parameter regimes were found where  the front is convectively unstable, which means that  small perturbations to the front that are initially  localized near the rest state $(P, T, Y) =(1, 1,0)$ stay near that equilibrium. 
%The parameter that is responsible for  the transition from the convective instability to the absolute instability is $\beta$. 
%Indeed, after  increasing $\beta$ beyond a critical value,   the front as a coherent structure  in the PDE system \eqref{phys00} ceases to exists under an influence of small perturbations.

{{To our knowledge,  for no parameter values the stability of  the traveling fronts in \eqref{phys0} (with $\epsilon>0$) has  been yet  addressed. {\cre{We point out that}} since the perturbation with small $\epsilon$ is singular, the stability  (instability) of a front in the limiting system \eqref{phys00} does not directly imply the stability (instability)  of the front in \eqref{phys0}, even when $\epsilon$ is  very  small. {\cre{While we follow  the same standard  sequence of  steps  as we did in \cite{GLM} for the case $\epsilon=0$, from a technical point of view  our analysis is significantly different.}}
%In our previous paper  \cite{GLM}, we have studied the case $\epsilon =0$ of Model (\ref{phys0}) (i.e.~Model \eqref{phys00}). Similarly to our work in this paper, we performed  the stability analysis of the combustion fronts by considering the eigenvalue problem arising from the linearization of (\ref{phys00}) about the front solutions.  The spectral stability for certain parameter regimes was obtained through a combination of numerical Evans function computations and energy estimates. The fronts were  also proved to be either  absolutely unstable or   convectively unstable on the nonlinear level.}}
{\cre{Indeed, the case $\epsilon =0$ is a singular limit of Model (\ref{phys0}) in the sense that the order of the system is reduced by two (in \cite{GLM}, the order is furthermore reduced by one by using special initial conditions). It is well known that properties of existence and stability in a singular limit do not have to hold in general, even when $\epsilon$ is small (see for example \cite{Jones}). In this paper, we consider another singular limit for Model (\ref{phys0}). This limit is obtained by choosing  a particular value for the Lewis number, {in the presence of a {strictly} positive  $\epsilon$}, and the initial conditions to satisfy (\ref{incond2}) below. This choice reduces the order of Model (\ref{phys0}) by two. One main difference with the limit $\epsilon=0$ studied in \cite{GLM} is  that { in the reduction of  Model (\ref{phys0}) that we consider here}  the dimension of the linearization is larger, making the Evans function computation more complicated. Namely, we had to use the definition of the Evans function { that involves} the wedge product as opposed to the definition
corresponding to a linear system with a one-dimensional stable manifold as in \cite{GLM}, which is simpler and less numerically sensitive because the Evans function is defined by the scalar product of two solutions. 
Moreover, the nature of the energy estimates computed  in both papers is completely different. This is because the orders of the systems studied in \cite{GLM} and here {are} not the same.
}}

 As one would expect, the numerical calculation of the Evans function assumes a more precise definition of the reaction term than  the one given at the beginning of the introduction, therefore we base our analysis on  the assumption % in 
  \cite{W}  {{of}}
   a discontinuous  reaction rate %$F_{d}(T)$ 
   is 
\begin{equation*}
\label{Fdef}
F_{d}(T) = \left\{\begin{array}{ll}
\mathrm{exp}\left(Z\left\{\frac{T-h}{\sigma+(1-\sigma)T} \right\} \right), & T \geq T_{ign},\\
0, & T < T_{ign},  
\end{array} \right.
\end{equation*}
where $Z>0$ is the  Zeldovich number,  and $0<\sigma<1$ is the ratio of the characteristic temperatures of fresh and burned reactant.

The discontinuity in the reaction term is often introduced in combustion models \cite{Kh} to account for the fact that for low temperatures the reaction rate is many orders less than the reaction rate  at high temperatures.

However, to work with a well-defined linear operator obtained by linearizing the reaction term about the continuous front, we follow the recipe given in \cite{GLM}  and consider a smooth $F$, which is defined like $F_{d}$ everywhere except for a small interval $(T_{ign} , T_{ign}+2\delta)$ where the function is modified so as to go to zero in a smooth and monotonic fashion, for example, as  
\begin{equation}\label{Fdefrc}
F_\delta (v) = \left\{\begin{array}{ll}
\mathrm{exp}\left(Z\left\{\frac{v-h}{\sigma+(1-\sigma)v} \right\} \right), & v \geq T_{ign}+2\delta,\\
\\
\mathrm{exp}\left(Z\left\{\frac{v-h}{\sigma+(1-\sigma)v} \right\} \right)\,H^\delta(v-T_{ign}-\delta),&  T_{ign}\leq v < T_{ign}+2\delta,\\
\\
0, & v < T_{ign},
\end{array} \right.
\end{equation}
where 
$$H^\delta(x)=
\begin{cases} 
\frac{1}{1+{e^{{\frac {4x\delta}{\delta^2-x^2
 }}}}},\;\;{\mbox{for}}\;\;|x|<\delta,\\ 1,\;\;\;\; \,\;\;\;\;\; \quad {\mbox{for}}\;\; x\geq \delta,
 \\ 0\;\;\;\;\;\; \;\;\;\;\;\, \quad {\mbox{for}}\;\; x\leq -\delta,\end{cases} $$
or some other smooth approximation  of the Heaviside function $H$. In other words, $H^\delta$ is a function such that in the distributional sense
$
\lim_{\delta\rightarrow 0^+}H^\delta =H,$  and $H^\delta(x)=1$, for $x>\delta$, $H^\delta(x)=0$, for $x<-\delta$.
%The limit above is understood in the distributional sense. For example,  $H^\delta$ can be chosen so that 
%\eq{
%$$H^\delta(x)=\frac{1}{1+{e^{{\frac {4x\delta}{ x^2-\delta^2
% }}}}},\;\;{\mbox{for}}\;\;|x|<\delta.$$
%}{Hex}

For numerical computations  in sections~ \ref{s:front} and \ref{num}, we choose $\delta$  small enough so that  the front velocity in the system with $F_{\delta}$ is  close to the velocity in the  system with  $F_d$. 
 
It is known \cite{GLM} that for the system with $\epsilon=0$  the front solution with $F=F_\delta$ converges  as $\delta\rightarrow 0^+$ to the front of the system with the reaction rate given by $F_d$.
The situation is more complicated when $\epsilon>0$.  The fronts in the  $\epsilon>0$ case exist  for any $\delta\geq 0$.   The proof is by construction and is based on geometric singular perturbation theory  which guarantees continuity in $\delta$ as long as $F_{\delta}$ is smooth, i.e.~for $\delta>0$.  
  The existing analytic proof does not provide information about whether the family of fronts parametrized by $\delta>0$ in the limit at $\delta \to 0^+$ converge to  the  wave that exists in the system with  discontinuous $F_d$.
We check numerically that it is indeed the case by verifying that for small values of $\delta$, we obtain wavespeeds very close to the wavespeeds of the discontinuous case. 

The stability results  in their turn depend on the way the  nonlinearity is smoothened and, in particular,  the value of $\delta$, but  we believe that the results that we obtain adequately reflect   what the stability of fronts in the  system \eqref{phys0} with a discontinuous reaction term will be.  
 
This paper  contains a   study  of the stability of the fronts in a reduced version of the system \eqref{phys0}  
which  is described below. The reaction term is taken with $F=F_\delta$ given in   \eqref{Fdefrc}.

\section{ Reduced model.}

There are two different ways to reduce  \eqref{phys0} that are discussed in literature. 
As we mentioned before, the standard reduction is  based on setting $\eps$ and $\eps \Le^{-1}$ equal to $0$. 

The other, different  reduction, that is described below, is based on  choosing a   value of $\Le$ in a very specific way.
The  reduced version of \eqref{phys0}  that is of interest to us here is suggested in \cite{GR, Gordon_review}. It is obtained  as follows.
 The system \eqref{phys0} is equivalent to the system 
\begin{eqnarray} \label{rr}
&& u_t= u_{zz}+yF(hu+(1-h) v), \nonumber \\
&& v_t= \eps v_{zz} +yF(hu+(1-h) v),  \\
&& y_t=\eps ({{\gamma}}(1-\mu){\Le})^{-1} y_{zz} -yF(hu+(1-h) v),\nonumber
\end{eqnarray}
 which is obtained by introducing  new variables  $(u,v,y)$  via a linear transformation
\eqnn{
T&= hu+  (1-h)v,\notag \\ P&= (1-\eps)^{-1}u -\eps (1-\eps )^{-1} v,\label{lintransf}\\Y& =y,\notag
}
where $h$ and $ \eps $ are defined as $\eps  = \epsilon\gamma (1-\mu)^2$, and $h=\mu/(1-\eps )=\mu/ (1- \epsilon(1-\mu)^2)$,   and  
$$\mu=\frac{\sqrt{\gamma^2(\epsilon+1)^2-4\gamma\epsilon}+\gamma(\epsilon-1)}{2\gamma \epsilon}, $$   
 and then rescaling of the independent variables
$$\tau=\gamma\,t,\;\;z=\sqrt{\gamma(1-\mu)}\,x.$$

The new parameters  in the system  \eqref{rr}  are assumed in \cite{Gordon_review} to have the following ranges:  $\eps \in(0,1)$ and $ h\in(0,1)$.   {We note that for $\epsilon \ll 1$, $\mu \approx  (\gamma -1)(\gamma -\epsilon)^{-1} $, and therefore $\eps\approx \epsilon\gamma$,
 $h \approx (1-\epsilon)^{-1} (\gamma -1) (\gamma -\epsilon)^{-1}\approx 1 - \gamma^{-1} +\rm{O}(\epsilon).$  In this paper we think of $\eps$ as a small parameter  and $h \in (0,1)$.}
 
{When ${{\Le^{-1}=\gamma(1-\mu)}}$, \eqref{rr} reads
\begin{eqnarray} \label{rrn}
&& u_t= u_{zz}+yF(hu+(1-h) v), \nonumber \\
&& v_t= \eps v_{zz} +yF(hu+(1-h) v),  \\
&& y_t=\eps  y_{zz} -yF(hu+(1-h) v).\nonumber
\end{eqnarray}
}
%The singular parameter value for \eqref{rr} is  ${{\Le^{-1}=\gamma(1-\mu)}}$. Indeed,   w
{We think of ${{\Le^{-1}=\gamma(1-\mu)}}$ as singular, because in this case,  if initially  
\eq{y(0,x) = 1-v(0,x),}{incond2} 
  then  
  $y(t,x) = 1-v(t,x)$ for  any $t >0$,  and $x\in \mathbb R$. 
  Therefore the system \eqref{rr}  reduces to 
\eq{
u_t&= u_{xx}+yF(hu+(1-h)(1-y)),  \\
y_t&= \eps y_{xx} -yF(hu+(1-h) (1-y)). 
}{r}
So the  system \eqref{r} is obtained by  choosing a specific  $\Le^*=\Le(%\eps,
\mu,\gamma)$ in \eqref{phys0}  and  by considering  a specific set of initial conditions \eqref{incond2}.  
}

{ In more general setting which does not require initial conditions of any specific form, using a new variable {\cre{$g=v+y$}}, the system  \eqref{rrn} may be rewritten  as 
\begin{eqnarray} \label{rrnn}
&& u_t= u_{zz}+yF(hu+(1-h) v), \nonumber \\
&& {\cre{y_t= \eps y_{zz} -yF(hu+(1-h) v),}}  \\
&& g_t=\eps  g_{zz}.\nonumber
\end{eqnarray}
This paper is devoted to the  study of  the  stability of fronts  in the reduced system \eqref{rrnn}. Our strategy is  to establish stability results for  the fronts in \eqref{r} first and  then show in {\cre{Section \ref{DC}}} that these results extend to the system \eqref{rrnn}.} {\cre{In other words, we first use the initial condition \eqref{incond2} to simplify our computations but we are able to lift this restriction (without any additional computations) by using the fact that System \eqref{rrnn} differs from \eqref{r} only by the addition of the uncoupled equation $g_t=\eps  g_{zz}$.}}

 We point out that  considering a specific value of the Lewis number, { which is here slightly above $1$ when $\epsilon$ is small,}  in some sense reminds us about a situation with a well-known combustion model \cite{B, Varas02,EvansWeberGroup,BGHL}
\eq{ u_{t}&=u_{xx}+yF(u),\\ y_{t}&=\frac{1}{\Le} y_{xx}- Zy F(u),}{gasslessgeneral}
where $Z$ is exothermicity parameter.  When $\Le=1$,  the system can be reduced in a straightforward way to one leading PDE, as $Zu+y$ then simply satisfy a heat equation.  It is well known that, for some specific  functions $F(u)$, $\Le=1$ is a bifurcation value of the Lewis number when a transition from a unique traveling wave ($\Le >1$) solution to a multiple traveling solutions ($\Le<1$)  occurs \cite{BLS}.  To the knowledge of the authors there is no study of  a similar role of the critical value of  the Lewis number in \eqref{phys0}.  

\section{Front Solution \label{s:front}}

After introducing the  moving coordinate $\xi=x-ct$ in which the front is stationary, the system (\ref{r}) becomes
\begin{equation}
\begin{aligned}
u_t=&u_{\xi\xi} + cu_\xi + yF(hu+(1-h)(1-y)), \\
y_t=&\eps y_{\xi\xi} + cy_\xi - yF(hu+(1-h)(1-y)).\label{rm}
\end{aligned}
\end{equation}

As we have  discussed in the introduction there exists a value of $c$  such that  \eqref{rm} has
 a front solution, in other words  there is a $(\widehat{u}(\xi),\widehat{y}(\xi))$  that solves
 \begin{equation}
\begin{aligned}
u_{\xi\xi} + cu_\xi +yF(hu+(1-h)(1-y))=0,\\
\eps y_{\xi\xi} + cy_\xi -yF(hu+(1-h)(1-y))=0,\label{tw}
\end{aligned}
\end{equation}
 and   satisfies
 \eq{(u,y) \to (1,0) \text{ as } \xi\to -\infty,\\  (u,y) \to (0,1) \text{ as }  \xi\to+\infty.}{bc0}
 % remark: not (1,0) as  before at +\infty.

Numerically, a solution to \eqref{tw} satisfying boundary conditions \eqref{bc0} can be found as follows. We reduce the order of  the system \eqref{tw} by introducing the following variable
$$
z = \frac{1}{c}\int_\xi^\infty yF(\ww).
$$
System \eqref{tw} then becomes
\begin{equation*}
\begin{aligned}&u_{\xi\xi}+cu_\xi-cz_\xi = 0,\\
&\eps y_{\xi\xi}+cy_\xi-cz_\xi = 0,
\end{aligned}
\end{equation*}
which, after taking into account the boundary conditions \eqref{bc0},  can be integrated to obtain
\begin{equation*}
\begin{aligned}&u_\xi + cu - cz = 0, \\
&\eps y_\xi + c y - cz = c.
\end{aligned}
\end{equation*}
%where $q$ and $r$ are constants. Because of the boundary conditions \eqref{bc0}, we  set  $q=0$ and $r=c$. 
We then are  left with the system
\begin{equation}
\begin{aligned} u_\xi &= c(z-u),  \\
y_\xi &= \frac{c}{\eps}(1- y + z), \\
z_\xi &= -\frac{1}{c}y F(\ww),
\end{aligned} \label{e:z}
\end{equation}
with boundary conditions  $(u,y,z) \to (1,0,1)$ as $\xi\to -\infty$, and $(u, y,z) \to (0,1,0)$ at $\xi\to \infty$.
We are thus interested in the heteroclinic orbit that connects the the fixed points $(1,0,1)$ and $(0,1,0)$. There is one unstable direction  at the point  $(1,0,1)$.  
  The corresponding eigenvalue and eigenvector are given by
\eqnn{&\mu=\frac{-c+\sqrt{c^2+4\eps e^{(1-h)Z}}}{2\eps},\\
%\;\;\;&v=\left(\eps\,c\, e^{(1-h)Z},\;c\left( c(\eps-1)\mu+e^{(1-h)Z}\right),\;
%\eps\, e^{(1-h)Z}\left(\mu+2c\right)  \right).\\
\;\;\;&{\cre{v_u=\left(1,\left( c(\eps-1)\mu e^{-(1-h)Z}+1\right)/\eps,\;
\mu/c+2  \right).}}}
To find the front solution we use a simple shooting method, that is we use initial conditions such that {\cre{$\left(u(0)-1,v(0),z(0)-1\right)=\beta v_u$ for a small $\beta>0$. For given values of  $\eps$, $h$, $\sigma$, $\delta$, $T_{ign}$, and $Z$, we use such initial conditions  and integrate System \eqref{e:z} using the numerical integrator {\sl{ODE45}} from MatLab for various values of 
 $c$ (with the {\sl{ODE45}} absolute and relative error tolerances set to $10^{-15}$ and $10^{-5}$, respectively). We do this until  an appropriate connection to the fixed point (0,1,0) is found. The $\beta$ is chosen small enough so that the speed $c$ appears to have converged, that is $c$ does not change substantially when $\beta$ is made smaller. We found that choosing $\beta$ to be $10^{-8}$ works well. Indeed, in all the cases treated in this article, we have checked that the percent change in the value of $c$ does not exceed 0.2\% when $\beta$ is changed from $10^{-8}$ to $10^{-9}$.}}
 
 \cre{{As an example of computation}},  Figure \ref{FigFront} shows the front solution as a function of $\xi$ in the case where $\eps=0.1$, $h=0.3$, $\sigma=0.25$, $\delta=0.0005${\cre{,}} $T_{ign}=0.01$, and $Z=6$, which corresponds to the speed $c=1.8588$. 
  
{{Figure \ref{SpeedFigure} shows how the speed $c$ varies as a function of the various parameters. In particular, Figure \ref{SpeedFigure} (d) shows $c$ as a function of  $\delta$. The parameter $\delta$ is chosen so that the speed $c$ is close to the speed of the discontinuous system. For this purpose, we use $\delta=0.0005$ as we have checked that with $\delta=0.0005$, the percentage difference between the velocities of the smoothed and the discontinuous  models do not exceed 2.2\% in all the cases considered in this article.

\section{Linearization \label{s:lin}}

To analyze the stability of the front we consider  the linearization
of (\ref{rm}) about the front $(\widehat{u},\widehat{y})$. 
 The eigenvalue problem for the linearized operator %about $(\widehat{u},\widehat{y})$
reads
\begin{equation}
\label{EigProb}
\begin{aligned}\lambda p &= p_{\xi\xi} + c p_\xi + F_w(\hw)\,\hy\,(hp-(1-h)q)+F(\hw)q, \\
\lambda q &= \eps q_{\xi\xi} + c q_\xi-F_w(\hw)\,\hy\, (hp-(1-h)q)-F(\hw)q,
\end{aligned}
\end{equation}
where    $w=hu+(1-h)(1-y)$ and $\hw = w(\widehat{u},\widehat{y})$, and  subscript $w$ denotes the derivative with respect to $w$.
%Alternatively, the system above can be written in the following way
%\begin{equation}\label{opform}
%$${\mathcal{L}}W=\lambda W, \;\;W=(p,q)^T,
%$$\end{equation}
%where
%\begin{equation}
%{\mathcal L}= \begin{pmatrix}1& 0 \\0 & \eps \end{pmatrix}\frac{\rd^2}{\rd \xi^2}+c\frac{\rd}{\rd \xi}+ 
%\begin{pmatrix}
%F_w(\hw)\,\hy\,h& F_w(\hw)\,\hy\,(1-h)+F(\hw) \\
%-F_w(\hw)\,\hy\,h & -F_w(\hw)\,\hy\,(1-h)-F(\hw)
%\end{pmatrix}.
%\label{oPDEf}
%\end{equation}
In a reaction-diffusion equation, a wave  is spectrally stable  in a space if the spectrum of (\ref{EigProb})
of the linearization of the system about the wave in that space  is contained in the half-plane
$\{\realpart{ \lambda}  \leq -\nu\}$  for some $\nu > 0$ with the exception of a simple zero eigenvalue which is generated by the translational invariance.

\subsection{Essential Spectrum\label{S:ess}}

Let ${\mathcal{L}}$ be the operator defined by the right hand side of \eqref{EigProb}.
To study the essential spectrum of ${\mathcal{L}}$ on {\cre{the}} space $L^2$, we  consider  operators  obtained by taking the limits of ${\mathcal{L}}$ at $\xi \to\pm \infty$, in other words, the operators obtained  from ${\mathcal{L}}$ by  inserting  {\cre{$\hy=1$, $\hw=0$,  
$\hy=0$, and $\hw=1$}},
%$${\mathcal{L}}_{\pm \infty}( \frac{\rd}{\rd \xi})=\lim_{x\rightarrow \pm\infty}{\mathcal{L}},$$
%or, more precisely, 
$${\mathcal{L}}_{\infty}=\begin{pmatrix}\frac{\rd^2}{\rd \xi^2}+c\frac{\rd}{\rd \xi}& 0 \\0 & \eps\frac{\rd^2}{\rd \xi^2}+c\frac{\rd}{\rd \xi}\end{pmatrix},\;\;{\mathcal{L}}_{-\infty}= \begin{pmatrix}\frac{\rd^2}{\rd \xi^2}+c \frac{\rd}{\rd \xi}& e^{(1-h)Z} \\0 &\eps\frac{\rd^2}{\rd \xi^2}+ c\frac{\rd}{\rd \xi}-e^{(1-h)Z}\end{pmatrix}.$$
%Here we have computed ${\mathcal{L}}^\infty$ by inserting $\hy=1$ and $\hw=0$ into (\ref{oPDEf}) and  ${\mathcal{L}}^{-\infty}$ with $\hy=0$ and $\hw=1$. 
The spectra of  the  operators ${\mathcal{L}}_{\pm\infty}$  \cite{Henry81}  is a set of curves that  are included in  the essential spectrum of ${\mathcal{L}}$. Moreover, the essential spectrum of  ${\mathcal{L}}$ is bounded by these curves.
One can find  these curves  from the  linear dispersion relation obtained by using modes of the form $e^{\i \sigma\xi}$. {\cre{ This amounts to considering the quantities ${\mathcal{L}}_{\pm\infty}(\sigma)$ obtained from ${\mathcal{L}}_{\pm\infty}$ by replacing the derivative with respect to $\xi$ by $\sigma$, which in its turn implies that the spectra can be obtained by working with  the equations 
 %these spectra using Fourier analysis, which amounts to compute the value of $\lambda$ for which
%\begin{equation}
$\label{det}{\mbox{det}}\left({\mathcal{L}}_{\pm\infty}(\sigma) -\lambda I\right)=0$ (see \cite{Simon06} for details). 
%ANNA I MOVED THIS PARENTHESIS FROM THE NEXT SENTENCE TO THIS ONE.}
} 
%\end{equation}
%where ${\mathcal L}_{\pm\infty}(\sigma)$ are obtained from ${\mathcal L}^{\pm\infty}$ by replacing $\frac{\rd}{\rd \xi}$ with $i\, \sigma$.
It is found  in \cite{Simon06}  that since $\eps <1$, the rightmost curve defined by (\ref{det}) is $\lambda=-\eps \sigma^2+\ri  c \sigma$, or $$\realpart \lambda=-\frac{\epsilon}{c^2}\, \impart \lambda ^2.$$
 The region to the right of this parabola and therefore the open left side of the complex plane (with the exception of the origin) can only contain point spectrum, i.e.~eigenvalues with finite multiplicity. 

On the other hand,  the essential spectrum on the subspace of $L^2$ that consists of functions that 
decay on $+\infty$ faster than $e^{\alpha \xi}$ for some $\alpha >0$ lays strictly to the left of the imaginary axes. Indeed, substituting  $(P,Q)=(p,q)e^{\alpha\xi}$ in \eqref{EigProb} we see that the linear operators produced by the linearization about the equilibria are given by
\eqnn{
{\mathcal{L}}_{\infty}(\alpha)&=\begin{pmatrix}\frac{\rd^2}{\rd \xi^2}+(c-2\alpha)\frac{\rd}{\rd \xi} +(\alpha^2-c\alpha)& 0 \\0 & \eps\frac{\rd^2}{\rd \xi^2}+(c-2\alpha)\frac{\rd}{\rd \xi}+(\epsilon\alpha^2-c\alpha)\end{pmatrix},\\
{\mathcal{L}}_{-\infty}(\alpha)&= \begin{pmatrix}\frac{\rd^2}{\rd \xi^2}+(c-2\alpha) \frac{\rd}{\rd \xi} +(\alpha^2-c\alpha) & e^{(1-h)Z} \\0 &\eps\frac{\rd^2}{\rd \xi^2}+ (c-2\alpha)\frac{\rd}{\rd \xi}-e^{(1-h)Z}\ +(\eps\alpha^2-c\alpha)\end{pmatrix}.
}%{w}
Therefore on that weighted space, the right most boundary of the essential spectrum is given by 
$\lambda=-\eps \sigma^2+\ri  (c-2\alpha) \sigma +(\alpha^2-c\alpha)$, or 
$\realpart  \lambda=-\frac{\eps}{(c-2\alpha)^2}\, \impart \lambda ^2 +(\alpha^2-c\alpha).$
So as long as  $0<\alpha<c/2$, { the sign of the convection term $c-2\alpha$ is preserved and  $\alpha^2-c\alpha<0$, the essential spectrum is to the left of the imaginary {\cre{axis}} and its boundary has the same orientation as the boundary of the essential spectrum of the  original problem.}

\subsection{The point spectrum.}

Further in the paper, we are planning to use numerical Evans function calculation to look for unstable eigenvalues of  (\ref{EigProb}). {\cre{Before doing so, we first want to}}  identify a bounded region where those eigenvalues may  reside, or, in other words, to identify a subset of the right half-plane of the complex plane with a bounded complement  where the unstable discrete eigenvalues are guaranteed to be absent. We accomplish this by obtaining  a bound on eigenvalue  $\lambda$ of  the operator ${\mathcal{L}}$  under the assumption $\realpart{\lambda}\geq 0$. %We want to find a bound on the norm of $\lambda$ in the complex plane. %The argument we use here is sothe argument presented in Section 3 of \cite{Ghazaryan2013}.
The  following lemma  holds for such $\lambda$.
\begin{lemma} \label{L:4.1d}
\eq{\realpart{\lambda}\iR{|p|^2}\leq h\iR{ F_w(\hw)\,\hy\,|p|^2}+\iR{(F(\hw)+(1-h)\, F_w(\hw)\,\hy)\,\left(\frac{|q|^2}{4\epsilon_1}+\epsilon_1 |p|^2\right)},}{rein}
 and
\eq{\left(\realpart{\lambda}+|\impart{\lambda}|\right)\iR{|p|^2}&\leq  \frac{c^2}{4}\iR{|p|^2 }+h\iR{ F_w(\hw)\,\hy\,|p|^2}\\&+ \iR{\left({F(\hw)+(1-h)  F_w(\hw) \,\hy}\right)\left(\frac{|q|^2}{2\epsilon_1}+\epsilon_1|p|^2\right)},}{imRinu}
for any real positive value of $\epsilon_1$.
\end{lemma}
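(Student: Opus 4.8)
\emph{Proof sketch.} The plan is to run a standard energy (Rayleigh–quotient) estimate based on the first equation of \eqref{EigProb} only. Let $(p,q)$ be an eigenfunction of $\mathcal L$ with eigenvalue $\lambda$, $\realpart{\lambda}\ge 0$. Since by Section~\ref{S:ess} such a $\lambda$ lies strictly to the right of the essential spectrum, the eigenfunction and its first derivatives decay exponentially as $\xi\to\pm\infty$, so $p_\xi\in L^2$ and all boundary terms produced by the integrations by parts below vanish. Multiplying the first equation of \eqref{EigProb} by $\bar p$, integrating over $\mathbb R$, and using $\iR{\bar p\,p_{\xi\xi}}=-\iR{|p_\xi|^2}$ together with $\realpart{\iR{\bar p\,p_\xi}}=\tfrac12\iR{(|p|^2)_\xi}=0$, I would take real parts to get
\[
\realpart{\lambda}\iR{|p|^2}=-\iR{|p_\xi|^2}+h\iR{F_w(\hw)\,\hy\,|p|^2}+\iR{\bigl(F(\hw)-(1-h)F_w(\hw)\,\hy\bigr)\realpart{q\bar p}}.
\]

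To obtain \eqref{rein}, I would discard the nonpositive term $-\iR{|p_\xi|^2}$, estimate the last integrand by $|F(\hw)-(1-h)F_w(\hw)\,\hy|\le F(\hw)+(1-h)F_w(\hw)\,\hy$ (which uses $F(\hw)\ge0$, $F_w(\hw)\ge0$, $\hy\ge0$) and $\realpart{q\bar p}\le|q|\,|p|$, and close with Young's inequality $|q|\,|p|\le\tfrac{|q|^2}{4\epsilon_1}+\epsilon_1|p|^2$, valid for every $\epsilon_1>0$.

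For \eqref{imRinu}, I would also take imaginary parts of the same identity, which (since $\iR{|p_\xi|^2}$ and $h\iR{F_w(\hw)\hy|p|^2}$ are real and $\iR{\bar p\,p_\xi}$ is purely imaginary) gives $\impart{\lambda}\iR{|p|^2}=c\,\impart{\iR{\bar p\,p_\xi}}+\iR{\bigl(F(\hw)-(1-h)F_w(\hw)\,\hy\bigr)\impart{q\bar p}}$. Adding the real-part identity to $\pm$ the imaginary-part identity produces $(\realpart{\lambda}\pm\impart{\lambda})\iR{|p|^2}$ on the left. On the right, the retained term $-\iR{|p_\xi|^2}$ is now used to absorb the convective contribution through $\bigl|c\,\impart{\iR{\bar p\,p_\xi}}\bigr|\le c\,\|p\|_{L^2}\|p_\xi\|_{L^2}\le\iR{|p_\xi|^2}+\tfrac{c^2}{4}\iR{|p|^2}$, while the two reaction cross terms merge into $\iR{\bigl(F(\hw)-(1-h)F_w(\hw)\hy\bigr)\bigl(\realpart{q\bar p}\pm\impart{q\bar p}\bigr)}$, which is bounded by $\sqrt2\iR{\bigl(F(\hw)+(1-h)F_w(\hw)\hy\bigr)|q|\,|p|}$ using $|\realpart z\pm\impart z|\le\sqrt2|z|$ and the same nonnegativity. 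Finishing with $\sqrt2\,|q|\,|p|\le\tfrac{|q|^2}{2\epsilon_1}+\epsilon_1|p|^2$ yields \eqref{imRinu}, and choosing $\pm$ so that $\pm\impart{\lambda}=|\impart{\lambda}|$ (equivalently, invoking the symmetry $\lambda\mapsto\bar\lambda$, $(p,q)\mapsto(\bar p,\bar q)$ of \eqref{EigProb}) gives the $|\impart{\lambda}|$ on the left.

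I do not expect a real obstacle here: the computation is routine. The only points that need care are the justification that point-spectrum eigenfunctions decay (so the boundary terms vanish and $\|p_\xi\|_{L^2}<\infty$), the fact that $F_w(\hw)\ge0$ and $\hy\ge0$ on the front (monotonicity of the reaction rate and nonnegativity of the physically relevant profile), and the constant bookkeeping—specifically that $-\iR{|p_\xi|^2}$ must be kept for \eqref{imRinu} but discarded for \eqref{rein}, and that the coefficient $\tfrac1{2\epsilon_1}$ in \eqref{imRinu} comes precisely from using $|\realpart z\pm\impart z|\le\sqrt2|z|$ rather than treating $\realpart{\lambda}$ and $\impart{\lambda}$ separately.
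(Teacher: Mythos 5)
Your proposal is correct and follows essentially the same route as the paper: multiply the first equation of \eqref{EigProb} by $\bar p$, integrate, split into real and imaginary parts, discard $-\iR{|p_\xi|^2}$ for \eqref{rein} but retain it to absorb the convective term via $c|p_\xi||p|\le |p_\xi|^2+\tfrac{c^2}{4}|p|^2$ for \eqref{imRinu}, and close with the bound $\realpart{q\bar p}+|\impart{q\bar p}|\le\sqrt2|p||q|$ and Young's inequality. The only cosmetic difference is that you add the real-part identity to $\pm$ the imaginary-part identity and choose the sign, whereas the paper first takes absolute values in the imaginary-part identity and then adds; the two are equivalent.
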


\begin{proof}
From the definition of $F$ it is easy to see that 
 $F$  is a nondecreasing function of $w$ for all fixed   $\sigma$ and $h$ in $[0,1]$, so  its derivative is nonnegative. 
We multiply the first equation of (\ref{EigProb}) by $\overline{p}$ and we then integrate to obtain
\begin{equation}
\label{fe}
\lambda \iR{|p|^2}=c\iR{p_\xi \overline{p}} -\iR{|p_\xi|^2}-(1-h)\iR{F_w(\hw)\,\hy \,q\,\op}+\iR{F(\hw)\,q\,\op}+h\iR{F_w(\hw)\,\hy\,|p|^2}.
\end{equation}
We take the real and imaginary parts of (\ref{fe}) to get
\eq{\realpart{\lambda}\iR{|p|^2}=h\iR{F_w(\hw)\,\hy\,|p|^2}-\iR{|p_\xi|^2}+\iR{\left(F(\hw)-(1-h){F_w(\hw)\,\hy }\right)\,\realpart{q\,\op}},}{re}
and
\eq{\impart{\lambda}\iR{|p|^2}=-\ri c\iR{p_\xi \overline{p}}+\iR{\left(F(\hw)-(1-h){F_w(\hw)\,\hy }\right)\,\impart{q\,\op}}.}{im}
Using Young's inequality on (\ref{re}), we  obtain (\ref{rein}).

From (\ref{im}), we get the following inequality
\eq{|\impart{\lambda}|\iR{|p|^2}\leq  c\iR{|p_\xi| |{p}|}+\iR{\left(F(\hw)+(1-h){F_w(\hw)\,\hy }\right)\,|\impart{q\,\op}|}.}{imin}
Adding (\ref{re}) and (\ref{imin}), we obtain
\eq{\left(\realpart{\lambda}+|\impart{\lambda}|\right)\iR{|p|^2}\leq  c\iR{|p_\xi| |{p}|}-\iR{|p_\xi|^2}+h\iR{F_w(\hw)\,\hy\,|p|^2}\\+ \iR{\left(F(\hw)+(1-h){F_w(\hw)\,\hy }\right)\,\left(\realpart{q\,\op}+|\impart{q\,\op}|\right)}.}{imRin}
Applying Young's inequality  ${c\,|p_\xi ||p|\leq \frac{c^2|p|^2}{4}+|p_\xi|^2}$ to the first term   of the RHS of (\ref{imRin})  and using the fact that 
$$\realpart{q\,\op}+|\impart{q\,\op}|\leq \sqrt{2}|p||q|\leq \frac{|q|^2}{2\epsilon_1}+\epsilon_1|p|^2,$$ 
we find (\ref{imRinu}). 
\end{proof} 
We next prove the following lemma.
\begin{lemma} \label{L:4.2}
\eq{\realpart{\lambda}\iR{|q|^2} + \iR{\left(F(\hw)-(1-h)F_w(\hw)\,\hy \right)|q|^2}\leq h\iR{F_w(\hw)\,\hy\,\left(\frac{|q|^2}{4\epsilon_2}+\epsilon_2 |p|^2\right)},}{rein2}
 and
\eq{\left(\realpart{\lambda}+|\impart{\lambda}|\right)\iR{|q|^2}&\leq \frac{c^2}{4\eps}\iR{|q|^2 }+h\iR{F_w(\hw)\,\hy\,\left(\frac{|q|^2}{2\epsilon_2}+\epsilon_2 |p|^2\right)}\\&- \iR{\left({F(\hw)-(1-h)F_w(\hw)\,\hy}\right)|q|^2},}{imRinu2}
for any real positive value of $\epsilon_2$
\end{lemma}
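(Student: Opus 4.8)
The plan is to mirror the derivation of Lemma \ref{L:4.1d}, but starting from the \emph{second} equation of the eigenvalue problem \eqref{EigProb} multiplied by $\overline{q}$ rather than the first equation multiplied by $\overline{p}$. Concretely, I would multiply
$\lambda q = \eps q_{\xi\xi} + c q_\xi - F_w(\hw)\,\hy\,(hp-(1-h)q) - F(\hw)q$
by $\overline{q}$ and integrate over $\mathbb{R}$. Integration by parts turns $\eps\int q_{\xi\xi}\overline{q}$ into $-\eps\int|q_\xi|^2$, and the $cq_\xi$ term contributes $c\int q_\xi\overline{q}$, whose real part vanishes and whose imaginary part is $-\ri c\int q_\xi\overline{q}$, exactly as in the proof of Lemma \ref{L:4.1d}. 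The zeroth-order terms give $-h\int F_w(\hw)\,\hy\,p\,\overline{q}$ together with $\int\bigl((1-h)F_w(\hw)\hy - F(\hw)\bigr)|q|^2$.

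Next I would split into real and imaginary parts. Taking the real part yields
$\realpart{\lambda}\int|q|^2 = -\eps\int|q_\xi|^2 - h\int F_w(\hw)\hy\,\realpart{p\,\overline{q}} - \int\bigl(F(\hw)-(1-h)F_w(\hw)\hy\bigr)|q|^2$,
and moving the last integral to the left-hand side, dropping the nonpositive $-\eps\int|q_\xi|^2$, and applying Young's inequality $|\realpart{p\,\overline{q}}|\le |p||q|\le \epsilon_2|p|^2 + |q|^2/(4\epsilon_2)$ to the cross term (using $F_w(\hw)\hy\ge 0$, which was already observed in the proof of Lemma \ref{L:4.1d}) gives \eqref{rein2}. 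For \eqref{imRinu2}, I would take the imaginary part, bound $|\impart{\lambda}|\int|q|^2 \le c\int|q_\xi||q| + h\int F_w(\hw)\hy\,|\impart{p\,\overline{q}}|$, add this to the real-part identity, then apply Young's inequality $c|q_\xi||q|\le \eps|q_\xi|^2 + \tfrac{c^2}{4\eps}|q|^2$ to absorb the $\eps\int|q_\xi|^2$ that appears with a minus sign, and finally use $\realpart{p\,\overline{q}}+|\impart{p\,\overline{q}}|\le\sqrt{2}|p||q|\le \epsilon_2|p|^2 + |q|^2/(2\epsilon_2)$ on the remaining reaction term. Collecting terms yields \eqref{imRinu2}.

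There is essentially no serious obstacle here; the argument is a near-verbatim transcription of the proof of Lemma \ref{L:4.1d} with the roles of the two components interchanged and the diffusion coefficient $1$ replaced by $\eps$ in the relevant places. The only points requiring a little care are bookkeeping: keeping the sign of the integral $\int\bigl(F(\hw)-(1-h)F_w(\hw)\hy\bigr)|q|^2$ correct as it migrates to the left-hand side (this term is \emph{not} sign-definite, which is why it is left on the left rather than estimated), making sure the factor $\eps$ lands in the denominator $\tfrac{c^2}{4\eps}$ after the Young step, and choosing the split of the cross terms ($\tfrac{1}{4\epsilon_2}$ versus $\tfrac{1}{2\epsilon_2}$) so that the two displayed inequalities come out exactly as stated. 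I would present the proof compactly, pointing to the proof of Lemma \ref{L:4.1d} for the steps that are identical.
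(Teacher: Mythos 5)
Your proposal is correct and follows essentially the same route as the paper: multiply the second equation of \eqref{EigProb} by $\overline{q}$, integrate to get the analogue of \eqref{fe}, split into real and imaginary parts, and apply Young's inequality with the weight $\eps$ in the $c|q_\xi||q|$ step and the splittings $1/(4\epsilon_2)$ and $1/(2\epsilon_2)$ on the cross terms. The only cosmetic difference is that the combined reaction term in the paper's intermediate inequality is $\left|\impart{p\,\oq}\right|-\realpart{p\,\oq}$ rather than $\realpart{p\,\oq}+\left|\impart{p\,\oq}\right|$, but both are bounded by $\sqrt{2}|p||q|$, so nothing changes.
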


\begin{proof}
We multiply the second equation of (\ref{EigProb}) by $\overline{q}${\cre{, we integrate and get}}
\begin{equation}
\label{fe2}
\lambda \iR{|q|^2}=c\iR{q_\xi \overline{q}}-\eps \iR{|q_\xi|^2}- \iR{\left(F(\hw)-(1-h)F_w(\hw)\,\hy \right)|q|^2}-h\iR{F_w(\hw)\,\hy\,p\,\oq}.
\end{equation}
The real and imaginary parts of (\ref{fe2}) are
\eq{\realpart{\lambda}\iR{|q|^2}=-\eps \iR{|q_\xi|^2}- \iR{\left(F(\hw)-(1-h)F_w(\hw)\,\hy \right)|q|^2}-h\iR{F_w(\hw)\,\hy\,\realpart{p\,\oq}}.}{re2}
and
\eq{\impart{\lambda}\iR{|q|^2}=-\ri c\iR{q_\xi \overline{q}}-h\iR{F_w(\hw)\,\hy\,\impart{p\,\oq}}.}{im2}
Using Young's inequality on (\ref{re2}), we get (\ref{rein2}).

From (\ref{im2}), we get the inequality
\eq{|\impart{\lambda}|\iR{|q|^2}\leq c\iR{|q_\xi| |{q}|}+h\iR{F_w(\hw)\,\hy\,\left| \impart{p\,\oq}\right|}.}{imin2}
Adding (\ref{re2}) and (\ref{imin2}), we obtain
\eq{\left(\realpart{\lambda}+|\impart{\lambda}|\right)\iR{|q|^2}\leq  c\iR{|q_\xi| |{q}|}+h\iR{F_w(\hw)\,\hy\,\left(\left|\impart{p\,\oq}\right|-\realpart{p\,\oq}\right)}
\\-\eps \iR{|q_\xi|^2}- \iR{\left(F(\hw)-(1-h)F_w(\hw)\,\hy \right)|q|^2}.}{imRin2}
Applying Young's inequality to $c|q||q_{\xi}|$ in 
 the first term of the RHS of (\ref{imRin2}) in the form
$$c|q||q_{\xi}| \leq \frac{c^2 |q|^2}{4 \eps} +\eps |q_{\xi}|^2,$$
and the fact that $$\left|\impart{q\,\op}\right|-\realpart{q\,\op}\leq \sqrt{2}|p||q|\leq \frac{|q|^2}{2\epsilon_2}+\epsilon_2 |p|^2,$$ we find (\ref{imRinu2}).
\end{proof}

The following theorem  holds.
\begin{theorem}
\label{T1d}
If $\lambda\neq 0$ is an eigenvalue for (\ref{EigProb}) with nonnegative real part, then we have the following inequalities:
\eq{\realpart{\lambda}\leq \left(3-\frac{3h}{2}\right)\sup_{\mathbb R}( F_{{w}}(\hw)\,\hy) +\mathrm{exp}\left(Z(1-h)\right),}{rein126}
and
\eq{|\lambda|\leq \frac{c^2}{4}\max{\left\{1,\,\frac{1}{\eps}\right\}}+\left(3-h\right)\sup_{\mathbb R}( F_{{w}}(\hw)\,\hy)+
\mathrm{exp}\left(Z(1-h)\right).}{imRinu7}
\end{theorem}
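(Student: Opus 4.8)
The plan is to obtain both bounds from a single mechanism: add the matched pair of a priori inequalities supplied by Lemma~\ref{L:4.1d} and Lemma~\ref{L:4.2}, after replacing the front-dependent coefficients by uniform constants, and then divide through by the squared $L^2$-norm $\iR{|p|^2}+\iR{|q|^2}$ of the eigenfunction, which is strictly positive because $(p,q)$ is a nontrivial eigenfunction.

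First I would record the pointwise estimates that feed the argument. Along the front, $\hw=h\hu+(1-h)(1-\hy)$ takes values in $[0,1]$ (by the front construction of Section~\ref{s:front}), so, since $F$ is nondecreasing in $w$ with $F(1)=\mathrm{exp}(Z(1-h))$, one has $0\le F(\hw)\le\mathrm{exp}(Z(1-h))$ on $\mathbb{R}$; likewise $F_w$ is bounded on $[0,1]$ and $0\le\hy\le1$, so $M:=\sup_{\mathbb{R}}\bigl(F_w(\hw)\,\hy\bigr)$ is finite and nonnegative and $0\le F_w(\hw)\,\hy\le M$. For \eqref{rein126}, I would set $\epsilon_1=\epsilon_2=\tfrac12$ in \eqref{rein} and \eqref{rein2}, move $\iR{\bigl(F(\hw)-(1-h)F_w(\hw)\hy\bigr)|q|^2}$ to the right-hand side of \eqref{rein2}, discard the resulting nonpositive term $-\iR{F(\hw)|q|^2}$, and then bound $F(\hw)$ by $\mathrm{exp}(Z(1-h))$ and $F_w(\hw)\hy$ by $M$ throughout. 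Adding the two inequalities, the coefficient of $\iR{|p|^2}$ works out to $\tfrac12\mathrm{exp}(Z(1-h))+\bigl(h+\tfrac12\bigr)M$ and that of $\iR{|q|^2}$ to $\tfrac12\mathrm{exp}(Z(1-h))+\bigl(\tfrac32-h\bigr)M$, and for $h\in(0,1)$ each of these is bounded by $\mathrm{exp}(Z(1-h))+\bigl(3-\tfrac{3h}{2}\bigr)M$; dividing by $\iR{|p|^2}+\iR{|q|^2}$ yields \eqref{rein126}. For \eqref{imRinu7} I would run the identical argument on \eqref{imRinu} and \eqref{imRinu2}; the only new feature is the two convection contributions, $\tfrac{c^2}{4}$ on $\iR{|p|^2}$ and $\tfrac{c^2}{4\eps}$ on $\iR{|q|^2}$, both of which are $\le\tfrac{c^2}{4}\max\{1,\tfrac1{\eps}\}$. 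Adding, dividing, and then using $|\lambda|\le\realpart{\lambda}+|\impart{\lambda}|$, which holds since $\realpart{\lambda}\ge0$, produces \eqref{imRinu7}.

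The main obstacle is the indefinite-sign term $\iR{\bigl(F(\hw)-(1-h)F_w(\hw)\hy\bigr)|q|^2}$ appearing in Lemma~\ref{L:4.2}: it cannot be used to control $\iR{|q|^2}$ by $\iR{|p|^2}$, so neither \eqref{rein126} nor \eqref{imRinu7} can be read off from the $p$-inequality in isolation. Adding the $p$- and $q$-inequalities is precisely what bypasses this difficulty, because the quantity that is then factored out is the full norm $\iR{|p|^2}+\iR{|q|^2}$ rather than $\iR{|p|^2}$ alone; after that, what remains is the routine bookkeeping of verifying that the single choice $\epsilon_1=\epsilon_2=\tfrac12$ makes both coefficients obey the stated bound uniformly in $h\in(0,1)$ and $\eps\in(0,1)$. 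The constants obtained this way are not sharp, which is harmless for the subsequent use of the theorem to confine the unstable eigenvalues to a bounded region.
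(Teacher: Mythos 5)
Your proposal is correct and follows essentially the same route as the paper: both form a linear combination of the $p$-estimate (Lemma \ref{L:4.1d}) and the $q$-estimate (Lemma \ref{L:4.2}), bound $F(\hw)$ by $\exp(Z(1-h))$ and $F_w(\hw)\hy$ by its supremum, check that the resulting coefficients of $\iR{|p|^2}$ and $\iR{|q|^2}$ are dominated by the stated constants for $h\in(0,1)$, and divide by the positive norm of the eigenfunction. The only difference is bookkeeping — the paper weights the $p$-inequality by $\theta=4$ (resp.\ $\theta=2$) with $\epsilon_1=\epsilon_2=1$, whereas you take equal weights with $\epsilon_1=\epsilon_2=\tfrac12$ — and your coefficients $(h+\tfrac12)$, $(\tfrac32-h)$, $(2-h)$ do indeed sit below $3-\tfrac{3h}{2}$ and $3-h$ as required.
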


\begin{proof}
Multiplying (\ref{rein}) by $\theta$ and {\cre{adding}} (\ref{rein2}), we obtain
\eqnn{\realpart{\lambda}\iR{(\theta |p|^2+|q|^2)}\leq \left((\theta+\epsilon_2)\,h+\epsilon_1\theta(1-h)\right)\iR{F_w(\hw)\,\hy\,|p|^2}\\+\left(1-h+\frac{h}{4\epsilon_2}+\frac{\theta(1-h)}{4\epsilon_1}\right)\iR{F_w(\hw)\,\hy\,|q|^2}+\\\left(\frac{\theta}{4\epsilon_1}-1\right)\iR{F(\hw)\,|q|^2}+\epsilon_1\theta\iR{F(\hw)|p|^2}.}%{rein12}
Next we choose  $\theta=4$ and  $\epsilon_1=\epsilon_2=1$ and obtain
\eq{\realpart{\lambda}\iR{(4 |p|^2+|q|^2)}\leq \left(\frac{h}{4}+1\right)\iR{F_w(\hw)\,\hy\,4|p|^2}\\
+\left(2-\frac{7h}{4}\right)\iR{F_w(\hw)\,\hy\,|q|^2}+
\iR{F(\hw)\,4 |p|^2}.}{rein124}
Since   $F$ is nondecreasing,  \eq{\displaystyle{\sup_{x\in \mathbb{R}}\left(F(\hw)\right)=F(1)=\mathrm{exp}\left(Z(1-h)\right)}.}{Fsupd}
The inequality (\ref{rein126}) then follows from using (\ref{Fsupd})  in (\ref{rein124}).

We then multiply (\ref{imRinu}) by $\theta$ and add (\ref{imRinu2}) to  obtain
\eqnn{&\left(\realpart{\lambda}+|\impart{\lambda}|\right)\iR{\left(\theta|p|^2+|q|^2\right)}\\ &\leq  \frac{c^2}{4}
\max{\left\{1,\,\frac{1}{\eps}\right\}}
\iR{\left(\theta|p|^2+|q|^2\right) }+ \left(\frac{\theta}{2\epsilon_1}-1\right)\iR{{F(\hw)}|q|^2}\\ &+\epsilon_1\iR{{F(\hw)}\,\theta\, |p|^2}+\left(\frac{h\epsilon_2}{\theta}+h+(1-h)\epsilon_1\right)\iR{F_w(\hw)\,\hy\,\theta\,|p|^2}\\ &+\left(\frac{\theta(1-h)}{2\epsilon_1}+\frac{h}{2\epsilon_2}+1-h\right)\iR{F_w(\hw)\,\hy\,|q|^2}.}
Here, we choose $\theta=2$ and $\epsilon_1=\epsilon_2=1$ and get
\eqnn{&\left(\realpart{\lambda}+|\impart{\lambda}|\right)\iR{\left(2|p|^2+|q|^2\right)}
\leq  \frac{c^2}{4}
\max{\left\{1,\,\frac{1}{\eps}\right\}}
\iR{\left(2|p|^2+|q|^2\right) }\\
&+\iR{{F(\hw)}\,2\, |p|^2}+\left(\frac{h}{2}+1\right)\iR{F_w(\hw)\,\hy\,2\,|p|^2}+\left(2-\frac{3h}{2}\right)\iR{F_w(\hw)\,\hy\,|q|^2}.}
Finally, using (\ref{Fsupd}), we obtain
\eqnn{\left(\realpart{\lambda}+|\impart{\lambda}|\right)\leq \frac{c^2}{4}
\max{\left\{1,\,\frac{1}{\eps}\right\}}
+\left(3-h\right)M+\mathrm{exp}\left(Z(1-h)\right).}
The fact that $|\lambda |\leq \realpart{\lambda}+|\impart{\lambda}|$ when $\realpart{\lambda}\geq 0$ then implies (\ref{imRinu7}). \end{proof}

\section{Linear Stability \label{num}}

The system (\ref{EigProb}) can be turned into a linear dynamical system 
of the form
\begin{equation}
X'=A(\xi,\lambda)\,X,
\label{linear}
\end{equation}
where $A$ is the $4\times 4$ following square matrix 
\eq{
A(\xi,\lambda)=
\left(
\begin{array}{cccc}
0 & 1 & 0 & 0 \\
\lambda - h\,F_w(\hw)\,\hy  &  -c  &  (1-h)\, F_w(\hw)\,\hy-F(\hw)  & 0 \\
0 & 0 & 0 & 1 \\
h\, F_w(\hw)\,\hy/\eps &  0  &  \left(\lambda + (1-h)\,F_w(\hw)\hy+F(\hw)\right)/\eps  &  - c / \eps
\end{array}
\right).
}{A}
The asymptotic behavior as $\xi\rightarrow\infty$ of the solutions to (\ref{linear}) 
is determined by the matrices
\[
{\mathcal A}^{\infty}(\lambda)=\lim_{\xi\rightarrow \infty}A(\xi,\lambda),
%\label{Ainf}
\]
which is found by inserting the values $\hy=1$ and $\hw=0$ into (\ref{A})
\eq{
{\mathcal A}^{\infty}(\lambda) = \begin{pmatrix}
0 & 1 & 0 & 0 \\
\lambda &  -c  &  0  & 0 \\
0 & 0 & 0 & 1 \\
0 &  0  &  \lambda /\eps  &  - c / \eps
\end{pmatrix}.
}{inf}
For $\realpart{\lambda} > 0$, the matrix of ${\mathcal A}^{\infty}$ has two eigenvalues with negative real part. They are given by
\begin{equation} \mu_{1+} = -\frac{1}{2\eps} (c + \sqrt{c^2 + 4 \eps \lambda}),\,\,\,
    \mu_{2+} = -\frac{1}{2} (c + \sqrt{c^2 + 4 \lambda}), \label{eiinf} \end{equation}
    and their corresponding eigenvectors are
%\eq{
$$ v_{1+} =  \left(0,0,1,\mu_{1+}\right)^T,\;\;\;\;
    v_{2+} =  \left(1,\mu_{2+},0,0\right)^T.$$
%}{eiginf}    
The asymptotic behavior as $\xi\rightarrow -\infty$ of the solutions to (\ref{linear}) 
is determined by the matrix
\eqnn{
{\mathcal A}^{-\infty}(\lambda)=\lim_{\xi\rightarrow -\infty}A(\xi,\lambda),
}
which is found by inserting the values $\hy=0$ and $\hw=1$ into (\ref{A})
\eq{
{\mathcal A}^{-\infty}(\lambda) = \begin{pmatrix}
0 & 1 & 0 & 0 \\
\lambda &  -c  &  e^{(1-h)Z}  & 0 \\
0 & 0 & 0 & 1 \\
0 &  0  &  (\lambda + e^{(1-h)Z})/\eps  &  - c / \eps
\end{pmatrix}.
}{minf}
For $\realpart{\lambda} > 0$, the matrix ${\mathcal A}^{-\infty}$ has two eigenvalues with positive real part. They are given by
\eq{  \mu_{1-} =-\frac{1}{2\eps} \left(c  - \sqrt{c^2 + 4 \eps \lambda  +
4 \eps e^{Z(1-h)}}\right) ,\quad
    \mu_{2-} = -\frac{1}{2}\left(c  - \sqrt{c^2 + 4 \lambda}\right),}{eiminf}
    and their corresponding eigenvectors are
\eqnn{
 v_{1-} = 
 & \left(1, \mu_{1-}, \frac{(1-\eps)(\lambda-c\,\mu_{1-})}{\eps\,e^{Z(1-h)}}+\frac{1}{\eps},
 {\frac { \left(  \left( {c}^{2}+\lambda\,\eps \right)  \left( 1-
\eps \right) +\eps\,e^{Z(1-h)} \right) {\it \mu_{1-}}-c \left( \lambda+e^{Z(1-h)}
 \right)  \left( 1-\eps  \right) }{{\eps}^{2}e^{Z(1-h)}}}
\right)^T,\\
v_{2-}=&\left(1,\,\mu_{2-},0,0\right)^T.
}%{eigminf}    

As a consequence \cite{codd}, the system (\ref{linear}) has two linearly independent solutions $X_{1+}$ and $X_{2+}$ converging to zero as $\xi\rightarrow\infty$ 
and two solutions $X_{1-}$ and $X_{2-}$ converging to zero as $\xi\rightarrow -\infty$, satisfying
$$\lim_{\xi\rightarrow \pm \infty} X_{i\pm} e^{-\mu_{i\pm}\xi}=v_{i\pm},\;\;i=1,2.$$
Clearly, a value of $\lambda$ is an eigenvalue for the problem (\ref{EigProb}) if and only if  the
 space of solutions of (\ref{linear}) bounded as $\xi \rightarrow + \infty$, spanned by $\{X_{1+},\,X_{2+}\}$, and
the space of solutions bounded as $\xi \rightarrow - \infty$, spanned
by $\{X_{1-},\,X_{2-}\}$, have an intersection of strictly positive dimension.
Those values of $\lambda$ can be located with the help of the Evans function \cite{Evans,Jones,Yanagida,Alexander90,Pego,Gardner98,Kapitula98a,Sandstede,Kapitula00,Li00}. 
The Evans function is a function of the spectral parameter $\lambda$; it is analytic,  real for $\lambda$ real, and it vanishes on the point
spectrum of the problem (\ref{EigProb}). 
There are several definitions of the Evans function. For our purpose of numerical computations, we use the
definition involving exterior algebra
\cite{AfBr01,AlBr02,Br99,BrDeGo02,Br00,DeGo05,NgR79,skms,EvansWeberGroup}.

We are  in the situation where the dimension of the system is $n = 4$ and the dimensions of the stable and unstable manifolds are  $n_s = n_u = 2$. In such a case, we consider the
wedge-space~$\CwedgeX$, the space of all two forms on ${\mathbb C}^{4}$. 
The induced dynamics of (\ref{linear}) on~$\CwedgeX$ can be written as
\begin{equation}\label{Unk}
{U}' = {\bf A}^{(2)}(\xi,\lambda){U} .
\end{equation}
Here the matrix ${A}^{(2)}$ is matrix generated by  ${A}=\{a_{ij}\}$ on the wedge-space $\CwedgeX$. Using the standard
basis of $\CwedgeX$,
\begin{equation}\label{5.1}
\begin{array}{l}
{\bf \omega}_1 = {\bf e}_1\wedge{\bf e}_2 ,\quad
{\bf \omega}_2 = {\bf e}_1\wedge{\bf e}_3 ,\quad
{\bf \omega}_3 = {\bf e}_1\wedge{\bf e}_4 ,\\%\quad
{\bf \omega}_4 = {\bf e}_2\wedge{\bf e}_3 ,\quad
{\bf \omega}_5 = {\bf e}_2\wedge{\bf e}_4 ,\quad
{\bf \omega}_6 = {\bf e}_3\wedge{\bf e}_4 ,
\end{array}
\end{equation}
where $\left\{{\bf e}_{1},{\bf e}_{2},{\bf e}_{3},{\bf e}_{4}\right\}$ is the standard basis of ${\mathbb C}^{4}$,
the matrix ${A}^{(2)}$ is given
by
{\small\[
{A}^{(2)}=
%\hspace{-.56in}
\left(
\begin{array}{ccccccccccc}
a_{11}\!+\! a_{22} && a_{23} && a_{24} && -a_{13} && -a_{14} && 0\\
a_{32} && a_{11}\!+\! a_{33} && a_{34} && a_{12} && 0 && -a_{14}\\
a_{42} && a_{43} && a_{11}\!+\! a_{44} && 0 && a_{12} && a_{13}\\
-a_{31}&& a_{21}&& 0 && a_{22} \!+\! a_{33} && a_{34} && -a_{24}\\
-a_{41}&& 0 && a_{21} && a_{43} && a_{22} \!+\! a_{44} && a_{23}\\
0  && -a_{41}&& a_{31}&& -a_{42} && a_{32} && a_{33} \!+\! a_{44}
\end{array}
\right).
\]}
In our case, the asymptotic matrices are given by
\[
\lim_{\xi\to\pm\infty}{A}^{(2)}(\xi,\lambda)=
\left({\mathcal A}^{\pm \infty}\right)^{(2)},
\]
where ${\mathcal A}^{\pm \infty}$ are as in (\ref{inf}) and (\ref{minf}). The eigenvalue of 
$\left({\mathcal A}^{\infty}\right)^{(2)}$ with the smallest real part is $\mu_{1+}+\mu_{2+}$ with eigenvector $v_{1+}\wedge v_{2+}$. 
The solution of (\ref{Unk}) given by $U_+=X_{1+}\wedge X_{2+}$ then behaves as
$$
\lim_{\xi\rightarrow \infty} U_{+} e^{-(\mu_{1+}+\mu_{1+})\xi}=w_+\equiv v_{1+}\wedge v_{2+}.
$$
Similarly, the solution  $U_-=X_{1-}\wedge X_{2-}$ behaves as
$$
\lim_{\xi\rightarrow -\infty} U_{-} e^{-(\mu_{1-}+\mu_{-})\xi}=w_-\equiv v_{1-}\wedge v_{2-}.
$$
This allows us to define the Evans function as
%
%\begin{equation}\label{2.9}
$$E(\lambda) \equiv {
U}_-\wedge {U}_+,$$
%\end{equation}
{\cre{where $U_{\pm}$ are evaluated at some chosen value of $\xi$ (often taken to be $\xi=0$)}}. If $U$ is written in components in the basis (\ref{5.1}),  then the Evans function is computed  \cite{AlBr02,BrDe99,BrDeGo02} as
%\eq{
$$E(\lambda) = {{
U}_-^T}\, \Sigma\, {U}_+,$$
%}{Evex}
where ${
\Sigma}$ is the matrix
\[
{\bf \Sigma} =
\left[
\begin{array}{ccccccccc}
0 & \hfill0 & \hfill0 & \hfill0 & \hfill0 & 1 \\
0 & \hfill0 & \hfill0 & \hfill0 & -1      & 0 \\
0 & \hfill0 & \hfill0 & \hphantom{-}1 & \hfill0 & 0 \\
0 & \hfill0 & \hphantom{-}1  & \hfill0 & \hfill0 & 0 \\
0 & -1 & \hfill0 & \hfill0 & \hfill0 & 0 \\
1 & \hfill0 & \hfill0 & \hfill0 & \hfill0 & 0
\end{array}
\right].
\]

 The function $E(\lambda)$ will be analytic in the any region of the complex plane where the eigenvalues 
$\mu_{1+}+\mu_{2+}$ and $\mu_{1-}+\mu_{2-}$ are, respectively, the eigenvalues with smallest and largest real part of $\left({\mathcal A}^{\infty}\right)^{(2)}$ and $\left({\mathcal A}^{- \infty}\right)^{(2)}$. In view of the expressions for the eigenvalues given in  (\ref{eiinf}) and (\ref{eiminf}), to define such a region, it suffices to implement the condition
\begin{equation}
\realpart{\lambda}>-\frac{1}{4}\min{\left(1,\,\frac{1}{\eps}\right)}.
\label{ancondition}
\end{equation}

%To compute the Evans numerically, we choose a positive value $\xi=L$ at which the matrix given $A$ is suitably close to its asymptotic value ${\mathcal{A}}^{\infty}$. We then integrate the system (\ref{Unk}) backward from $\xi=L$ in the direction of the eigenvector $w_+$ and find $U_+(0)$. Similarly, we find $U_-(0)$ by integrating the system (\ref{Unk}) from $\xi=-L$ to 0.   

{\cre{To compute the Evans numerically, we choose a positive value $\xi=L$ at which the matrix given $A$ is suitably close to its asymptotic value ${\mathcal{A}}^{\infty}$. We then use {\sl{ODE45}} to integrate the system (\ref{Unk}) backward from $\xi=L$ in the direction of the eigenvector $w_+$ and find $U_+$ (with the {\sl{ODE45}} absolute and relative error tolerances both set to $10^{-10}$). Additionally, in order to eliminate the exponential growth due to the eigenvalue with negative real part as we integrate from $\xi=L$, we modify the system (\ref{Unk}) in the following way  
\begin{equation}
\label{Unkm}
 U'=\left(A^{(2)}-(\mu_{1+}+\mu_{2+}) I\right)\,U,
 \end{equation}
 and use the initial condition $U(L)=w_+$. 
To find $U_-$, we integrate the following modified version of  (\ref{Unk})
\begin{equation}
 U'=\left(A^{(2)}-(\mu_{1-}+\mu_{2-}) I\right)\,U,
 \end{equation}
frontward from $\xi=0$ with initial condition $U(0)=w_-$, (recall that $\xi=0$ is the value of $\xi$ from which the front was computed (see Section \ref{s:front})).  The Evans function then is taken to be the quantity   
${{U}_-^T}\, \Sigma\, {U}_+$ evaluated at the value $\xi=\xi^*$ at which the front solution of (\ref{e:z}) satisfies  $u(\xi^*)=y(\xi^*)$ (in Figure \ref{FigFront}, this corresponds to the value of $\xi$ at which the solid and dashed lines meet). Note that in the numerical computations, we choose the eigenvectors $w_\pm$ so that $ {{
w}_-^T}\, \Sigma\, {w}_+=1$. 
{Note also that in our case, since $F(w)=0$ for {$w< v_{ign}$}, the matrix $A$ in \eqref{A} reaches the constant matrix $A^{\infty}$ for a finite positive value of $\xi$. When computing $X_+$, we thus choose the value of $L$ to be greater than the lowest value of $\xi$ satisfying $\hw(\xi)<T_{ign}$.}}}

Since we are interested in the zeroes of the Evans function, the standard method is to compute the integral of the logarithmic derivative of the Evans function on a given closed curve and obtain the winding number of $E(\lambda)$ along that curve.  
In our case, the contour of integration is chosen so that it lies in the region defined by (\ref{ancondition}) and so that its interior encloses the  intersection of the   right side of the complex plane and the region defined by (\ref{imRinu7}). For example, in the case  $\eps=0.1$, $h=0.3$, $Z=6$, $\sigma=0.25$, $\delta=0.0005$, and $T_{ign}=0.01$
 we find the  RHS of (\ref{imRinu7}) to be $187.4478$ and $c$ to be $1.8588$. {\cre{For the computation of the Evans function, we find that we can take $L$ to be 2.7786 and}} our numerical winding number computation then shows that the Evans function has no zeroes other than the one at the origin. Note that the eigenvalue at $\lambda=0$ is due to the translation invariance of the system \eqref{r}. {\cre{We have made the same computations for several values of $h$ and $\sigma$ between 0 and 1 (specifically for $h=0.3, 0.6, 0.9$ and $\sigma=0.25, 0.5, 0.75$)}}, and  for $Z=4,5,6$ (with $\eps=0.1$, $\delta=0.0005$, and $T_{ign}=0.01$). Every time, we have found the eigenvalue at the origin to be the only one. This thus strongly suggests that there is a regime in which the front solution is spectrally stable,  {\cre{with the exception of  essential spectrum touching the imaginary axis.}}

Note that  {\cre{there is a special parameter regime where $F$ in System  \eqref{r}  is a function of one variable only and where System  \eqref{r} is related to a well-studied combustion model.}} \cre{As we explain below, this connection indicates that there  are parameter  regimes  when the wave in  \eqref{r} is unstable. Indeed,  when} $h=1$, $\sigma =0$, $\delta=T_{ign}=0$, the function $F$ in \eqref{Fdefrc} reduces to $F=e^{-1/u}$ and System \eqref{r}  can be scaled to {the well studied  case of the model (\ref{gasslessgeneral}). 
\cre{More precisely}, if one uses the change of variables  $\widetilde{u} = u/Z$,  $\widetilde{y}=y$,  $t= (Z/e^Z)\widetilde{t}$, and $x=\sqrt{Z/e^Z}\,\widetilde{x}$, {\cre{then system}} \eqref{r}, becomes 
\eq{ u_{t}&=u_{xx}+yF(u),\\ y_{t}&={\cre{\eps}} y_{xx}- Zy F(u).}{gassless0}}System (\ref{gassless0}) is a combustion model  {\cre{with large Lewis number $\Le=1/\eps$ \cite{B, Varas02,EvansWeberGroup, BGHL}}} for which there is an instability {\cre{occurring}} at about $Z=6.5$ for small $\eps$ \cite{Weber97,Ghazaryan2013, EvansWeberGroup}. This instability is  caused by a pair of conjugate eigenvalues crossing the imaginary axis to the right side of the complex plane  \cite{Ghazaryan2013,EvansWeberGroup}. {\cre{To obtain an example of instability in our model, we use the fact that we know an instability for System (\ref{gassless0}) above and the fact that System \eqref{r} is equivalent to  System (\ref{gassless0}) for a specific set of parameter values. We consider sets of parameters close the values $h=1$, $\sigma =0$, $\delta=T_{ign}=0$ for which \eqref{r} is equivalent to  System (\ref{gassless0}). Specifically, we consider the set of parameter values: $\eps=0.01$, $h=1$, $\sigma=0$, $\delta=0.0005$, and $T_{ign}=0.01$ in System \eqref{r}. In this case, we find a pair of eigenvalues $\lambda=\pm 0.1692\ri $ when $Z=6.578$. These two eigenvalues move to the right side of the complex plane when $Z$ is increased. If we increase the value of $\eps$ to 0.1, and keep the other parameters the same, the pair of eigenvalues occur for $Z=7.036$ at $\lambda=\pm 0.1088\ri$. Note that in order to locate an eigenvalue, we use the fact that (as mentioned before) the integral of the logarithmic derivative of an analytic function $f$ along a closed curve $\Gamma$ can be used to obtained the number of zeroes inside that curve together with the fact that the integral
$$
\frac{1}{2\pi \ri}\oint_\Gamma\frac{zf'}{f}dz
$$
gives the sum of the zeroes inside $\Gamma$. We numerically compute these integrals to locate zeroes of the Evans function in a given domain. Generally, this technique is known as the method of moments \cite{Ghazaryan2013,Bronski86,Overman86}. However, since in our case we are only looking for one root at a time only, the use of the higher moments (where $z$ is replaced by $z^n$) is not necessary.  
%We plan a study of this instability, for various values of the parameters in a future publication. 
}}

\section{Convective instability. \label{cs}}

As we showed above there are parameter regimes when the linearization of the system \eqref{rm} around the front has no unstable discrete spectrum, but  has  essential spectrum that extends to  imaginary axes.  We  have checked in Section ~\ref{S:ess} that  this  spectrum can be moved to the left of the imaginary axis  by a  weight  $\rm e^{\alpha\xi}$, $\alpha > 0$, therefore, on the linear level, the instability of the front is convective,  i.e. perturbations to the front that are localized sufficiently close to its tail do not  influence the dynamics near the interface of the front. In the frame moving with the front the perturbations are transported towards $-\infty$ and are guaranteed  to grow no faster than the reciprocal of the exponential weight.
 In  this section we study whether this is the case on the nonlinear level as well.  More precisely,   we show that  Theorems 3.14 and 3.16 from \cite{Ghazaryan10} apply and describe their conclusions. 
 
  First, we check that the nonlinearity in \eqref{rm} satisfies the  conditions of the mentioned above  theorems.  Recall that the front $(\widehat{u}(\xi),\widehat{y}(\xi))$  satisfies boundary conditions  $(u,y) \to (1,0)$  as $\xi\to -\infty$, and  $(u,y) \to (0,1)$  as  $ \xi\to+\infty$.
 Theorems 3.14 and 3.16 from \cite{Ghazaryan10} are formulated for a system that supports a front that approaches the zero state at $-\infty$. Therefore we rewrite  \eqref{rm} in terms of a new variable $r=u-1$,
\begin{equation}
\begin{aligned}
r_t=&r_{\xi\xi} + cr_\xi + yF(h(r+1)+(1-h)(1-y)), \\  
y_t=&\eps y_{\xi\xi} + cy_\xi - yF(h(r+1)+(1-h)(1-y)).\label{rmq}
\end{aligned}
\end{equation}
This system has equilibria $(0,0)$ and $(-1,1)$.
We write  the reaction term  of \eqref{rmq} in vector notation that are used in \cite{Ghazaryan10}
$$
R(r,y)=\begin{pmatrix}     yF(h(r+1)+(1-h)(1-y))\\- yF(h(r+1)+(1-h)(1-y))
\end{pmatrix}. 
$$

For the results of \cite{Ghazaryan10} to hold, function $R$ should be in  $R\in C^3$, therefore here we do not consider the reaction term with a jump, but only the smooth version of it given by \eqref{Fdefrc}.  It is easy to see that $R(r,0)\equiv 0$ {\cre{for any $r$}}, so the linearization of the system \eqref{rmq} about the equilibrium  $(0,0)$ has a triangular structure  that implies that the evolution of the perturbation $\tilde y$ to the $y=0$ component is independent of the behavior of the perturbation $\tilde r$ to the $r=0$ component,
\begin{equation}
\begin{aligned}
\tilde r_t=&\tilde r_{\xi\xi} + c\tilde r_\xi +  F(1)\tilde y, \\
\tilde y_t=&\eps \tilde y_{\xi\xi} + c\tilde y_\xi - F(1)\tilde y .\label{rmqlin}
\end{aligned}
\end{equation}
Let us turn our attention to the operator  that is given by the right hand side of  \eqref{rmqlin},
$$L\begin{pmatrix} \tilde r\\\tilde y\end{pmatrix} = \begin{pmatrix}    \partial_{\xi\xi} +c\partial_{\xi} & F(1)\\0&\eps \partial_{\xi\xi} +c\partial_{\xi} -F(1) \end{pmatrix} \begin{pmatrix} \tilde r\\\tilde y\end{pmatrix}.$$
For the theorems that we {\cre{wish}} to apply, it is important that 
 the linear operator $\eps \partial_{\xi\xi} +c\partial_{\xi} -F(1) $ has  strictly negative spectrum,  and therefore the perturbation  $\tilde y$ decays in time exponentially, and the operator  $\partial_{\xi\xi} +c\partial_{\xi} $ generates a bounded semigroup on the space $L^2$.

The second set of conditions is concerned with the properties of the front and the linearization of \eqref{rmq} about it. We know from the geometric construction  \cite{GGJ} that the front we consider  converges to its rest states at exponential rates. We also know from Section \ref{S:ess} that the marginally stable essential spectrum is moved to the left of the imaginary axis by the weight $e^{\alpha\xi}$ for $0<\alpha<c/2$. We then choose $\alpha$  so small that each of the components of the derivative of the front  $(\widehat{u\,}^{\prime}(\xi),\widehat{y\,}^{\prime}(\xi))$ belongs to the  weighted Sobolev space $L_{\alpha}^2$ which {for us is a space of functions $f$  such that $f(\xi)e^{\alpha\xi} \in L^2$.   }

Thus all of  the hypothesis of theorems 3.14 and 3.16 from \cite{Ghazaryan10} hold.

%As in  \cite{Ghazaryan10}, let $E_0$ denote $H^1$ or $BUC$, with norm  $\|\;\|_0$.  Let $E_\alpha$ be the corresponding weighted space with a weight  $ e^{\alpha\xi}$: $u \in E_\alpha$ if and only if $e^{\alpha\xi}u(\xi) \in E_0$, and $\|u\|_\alpha = \|e^{\alpha\xi}u(\xi)\|_0$.  

 Theorems 3.14  and  Theorem 3.16 then imply  a result which is similar to  Theorems~6.1  and~6.2 in \cite{GLM}  which were obtained for a reduction of the original system corresponding to $\epsilon=0$ case.  We refer the reader to \cite{GLM} for the exact formulations  and physical interpretations and  here just give a description of the obtained for the  System \eqref{rm} results. Suppose that  initially the perturbations to the front  are small in both  regular norm  (either  $H^1$ or $BUC$) and the weighted norm, then in the weighted norm the perturbations to all components of the front decay  exponentially fast, in other words, the front is orbitally stable with asymptotic phase; in the regular norm, the perturbation to $y$ component decays exponentially in the norm without the weight, but the perturbation to the $u$ component stays only bounded. 
 Theorem 3.16 in \cite{Ghazaryan10}  implies that if the initial perturbation in addition are small in $L^1$-norm,  then the perturbation to {\cre{the $u$-component}} decays  diffusively  in $L_{\infty}$-norm.
 The similarity of the nonlinear stability results should not be surprising {cre{since}} the fronts considered  here and in \cite{GLM} correspond to different, singular reductions of the same  system.

\section{{{Discussion and conclusion}}}
\label{DC}

In this paper, we have considered Model (\ref{phys0}) describing combustion in inert porous media under condition of high hydraulic resistance. This model can be scaled to System (\ref{rr}) and we were interested in traveling front solutions, whose existence and unicity was established in  \cite{Dkh,GKS}. Assuming that the Lewis number has the special value ${{\Le^{-1}=\gamma(1-\mu)}}$, the system can be written as \eqref{rrnn}. Furthermore, assuming that the initial conditions (\ref{incond2}) hold, the system can be reduced to (\ref{r}) (first obtained in \cite{GR, Gordon_review}). While our study focussed on this reduced system, we show below that the results obtained in this article still hold if (\ref{incond2}) is not satisfied. Concerning the reduced system (\ref{r}}), we were  interested in the stability analysis of the traveling front solutions which  satisfy the boundary conditions (\ref{bc0}). We have considered the eigenvalue problem arising from the linearization of the equations about the front solutions. However, in order to have a well-defined linear operator, we have opted for a smooth version of the reaction term (\ref{Fdefrc}) that can be made as close as desired to the discontinuous version given in (\ref{Fdef}). 
In a weighted version of $L^2$, the essential spectrum was shown to lie in the open left side of the complex plane. As far as the point spectrum goes, we have performed an energy estimate computation to obtain an upper bound on the modulus of any possible eigenvalue with non-negative real part. Using this bound, we have used a numerical winding number computation of the Evans function to show that there are parameter regimes for which there {\cre{are}} no eigenvalues on the right side of the complex plane. Specifically, we have considered an array of values of $h$ and $\sigma$ between 0 and 1 {\cre{($h=0.3, 0.6, 0.9$ and $\sigma=0.25, 0.5, 0.75$)}} for $Z=4,5,6$, while keeping $\eps=0.1$, $\delta=0.0005$, and $T_{ign}=0.01$.  For the parameter regimes with  no unstable discrete spectrum, we further have proved that the instability caused by the essential spectrum is convective, which means perturbations to the front  that are localized sufficiently close to its tail do not  influence the dynamics near the interface of the front.

We would now like to show how our results  extend to the case where the initial condition \eqref{incond2} do not hold. Concerning the spectral stability, if we remove the initial condition given in \eqref{incond2} but still consider the value ${{\Le^{-1}=\gamma(1-\mu)}}$, then the system \eqref{rr}  becomes \eqref{rrnn} and thus differs from the 
reduced system \eqref{r} by the addition of the heat equation  $g_t=\eps g_{xx}$. 
The eigenvalue problem corresponding to the heat equation has no point spectrum and its continuous spectrum is restricted to  the open left side of the complex plane except for the point at the origin. As a consequence, the results obtained about the spectral stability obtained for the reduced system \eqref{r} still hold for  the system \eqref{rrnn} which is   full system \eqref{rr} (or \eqref{phys0}) in the special case where ${{\Le^{-1}=\gamma(1-\mu)}}$.
 For the nonlinear stability,  we perform {\cre{an}} analysis similar to that in Section \ref{cs}. It is easy to see  that {\cre{theorems}} 3.14 and 3.16 from \cite{Ghazaryan10}  are applicable   to {\cre{System}}  \eqref{rrnn} as well.   We conclude that the time evolution of perturbations to  {\cre{the}} $u$ and $y$ components are the same as in reduced system  \eqref{r}.  In case when  initial perturbations to the front  are small in both  regular   $H^1$ or $BUC$ norm  and in  the weighted norm, the additional,  $g$ component 
stays bounded in the norm without the weight and decays exponentially in the weighted norm.  Taking into account behavior of the perturbation{\cre{s}} to the $y$ component which decays exponentially in all norms, we can conclude that  perturbations to the original $v$-component  behave the same way as perturbations to $g$.
If, in addition,  initial perturbations  are also small in $L^1$-norm,  then the perturbation{\cre{s}} to $g$,  and therefore $v$-components  of the front  not only stay bounded but also decay algebraically  in $L_{\infty}$-norm.

We point out that the stability of the fronts in the full Model (\ref{phys0}) for values of the Lewis number other than the one considered here  is still an open question and will be a subject of our future research. As mentioned above, the cases of $\epsilon =0$ considered in our previous paper \cite{GLM} is singular and the results obtained for that case do not imply that similar results have to hold for the Model (\ref{phys0}) in general.

To conclude, in this paper we investigated the stability of fronts for a specific value of the Lewis number. To do so,  we have first considered one of  the known singular reductions of Model (\ref{phys0}). The singular reduction considered here is based on choosing a  special value of the Lewis number which is a parameter that is not present in the other singular perturbation of (\ref{phys0}) studied in  \cite{GLM} (when $\epsilon=0$). Although the analysis  follows the same sequence of  steps as in \cite{GLM} for stability analysis of fronts  such as spectral stability and the proof of the  nonlinear stability for marginally stable spectrum, from the technical point of view  it is  significantly different from the one performed in \cite{GLM}. This can be easily understood by the fact that setting $\epsilon=0$ in (\ref{phys0}) changes the nature of the system of PDEs since it reduces the order by 2.  The results of this article show that the fronts that the full model  (\ref{phys0}) exhibits for ${{\Le^{-1}=\gamma(1-\mu)}}$ are indeed physical because they are in some sense stable.

%\vspace{.5cm}

% { \bf  Competing  interests.} The authors have no competing interests.
 
% \vspace{.5cm}

%{ \bf  Author contributions.}  McLarnan was a graduate student who developed a prototype of the code for the Evans function calculation  as a part of his Masters project. Ghazaryan and Lafortune  established the results and wrote the manuscript.  %{ A STATEMENT OF AUTHOR CONTRIBUTION   IS REQUIRED AFTER THE ACCEPTANCE, SO WE CAN LEAVE IT OUT FOR NOW.}

\vspace{.5cm}

 { \bf Funding statement.} 
This work was  supported by the National Science Foundation through grants DMS-1311313 (A. Ghazaryan) and DMS-0908074 (S. Lafortune).

%\vspace{.5cm}

%{\bf{Ethics statement.}} 
%This research poses no ethical considerations

\begin{figure}
\hspace{-0cm}
\scalebox{.73}{{\includegraphics{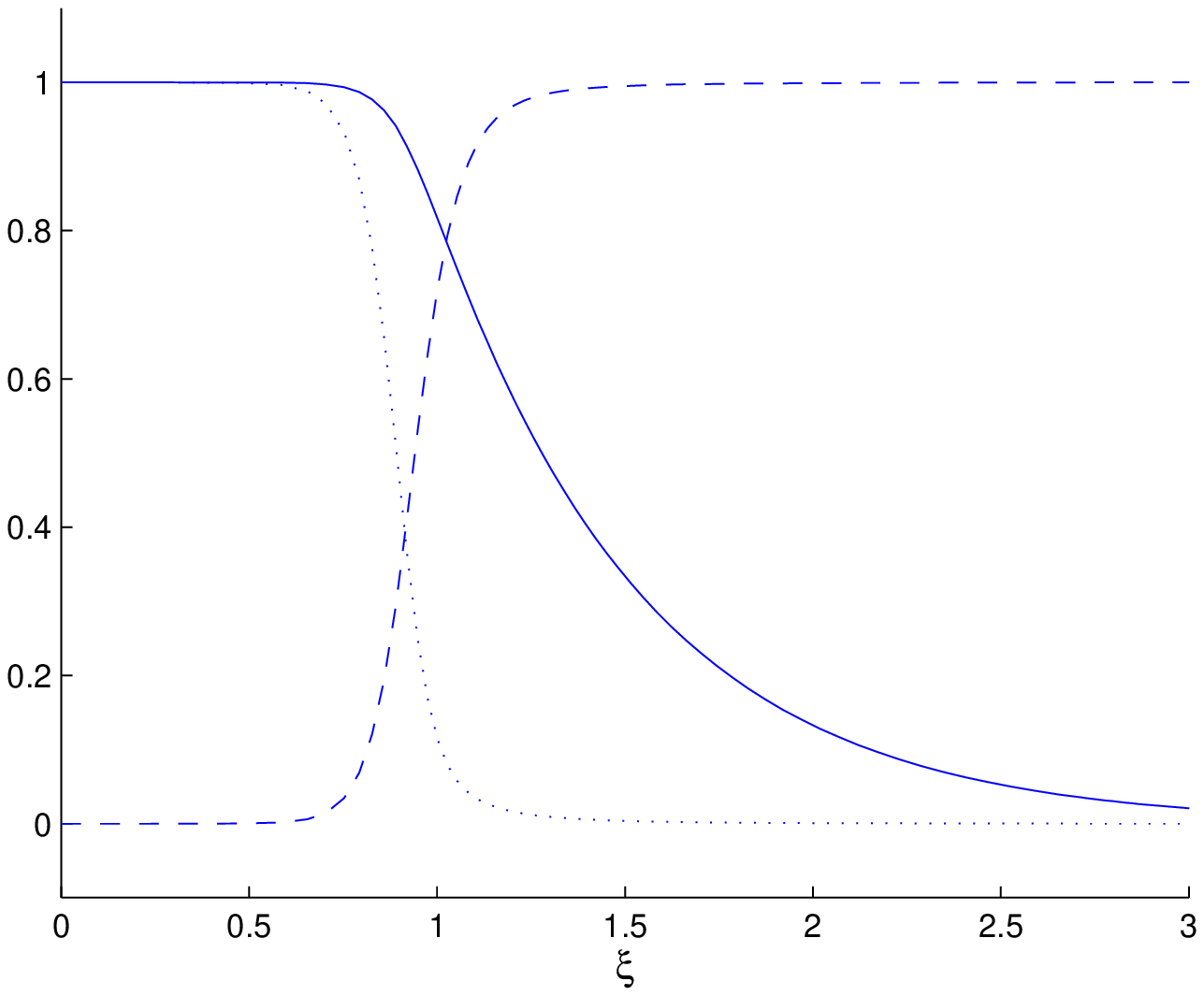}}}
\caption{\label{FigFront} Front solution of the system (\ref{e:z}) in the case $c=1.8588$, $\eps=0.1$, $h=0.3$, $\sigma=0.25$, $\delta=0.0005$, {and $T_{ign}=0.01$}, and $Z=6$. The solid line corresponds to $u$, the dashed line to $y$, and the dotted line to $z$.
}
\end{figure}

\begin{figure}
\subfigure[]{
{\includegraphics[width=200pt]{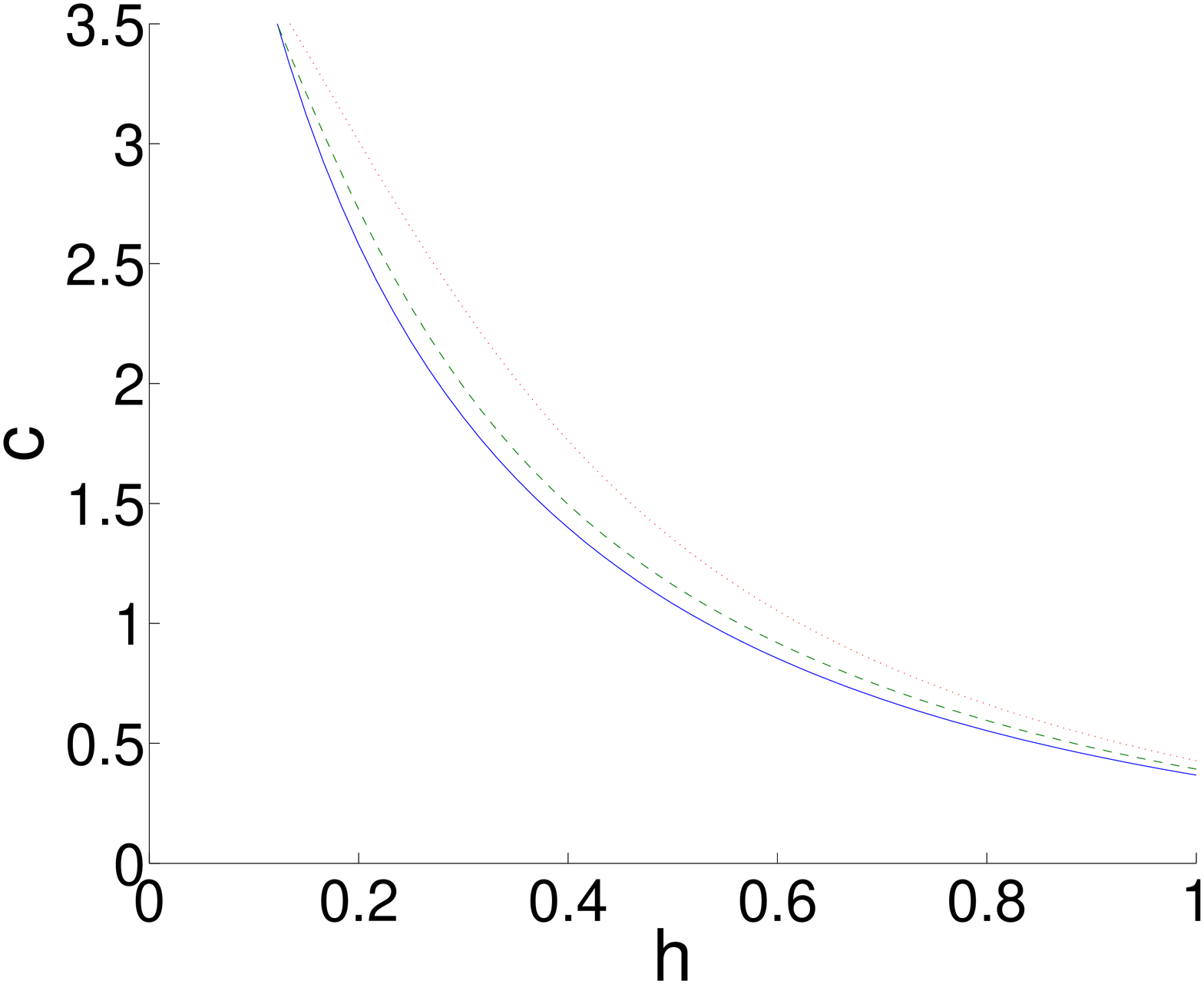}}  
\label{fig:subfig1}
}
\subfigure[]{
{\includegraphics[width=200pt]{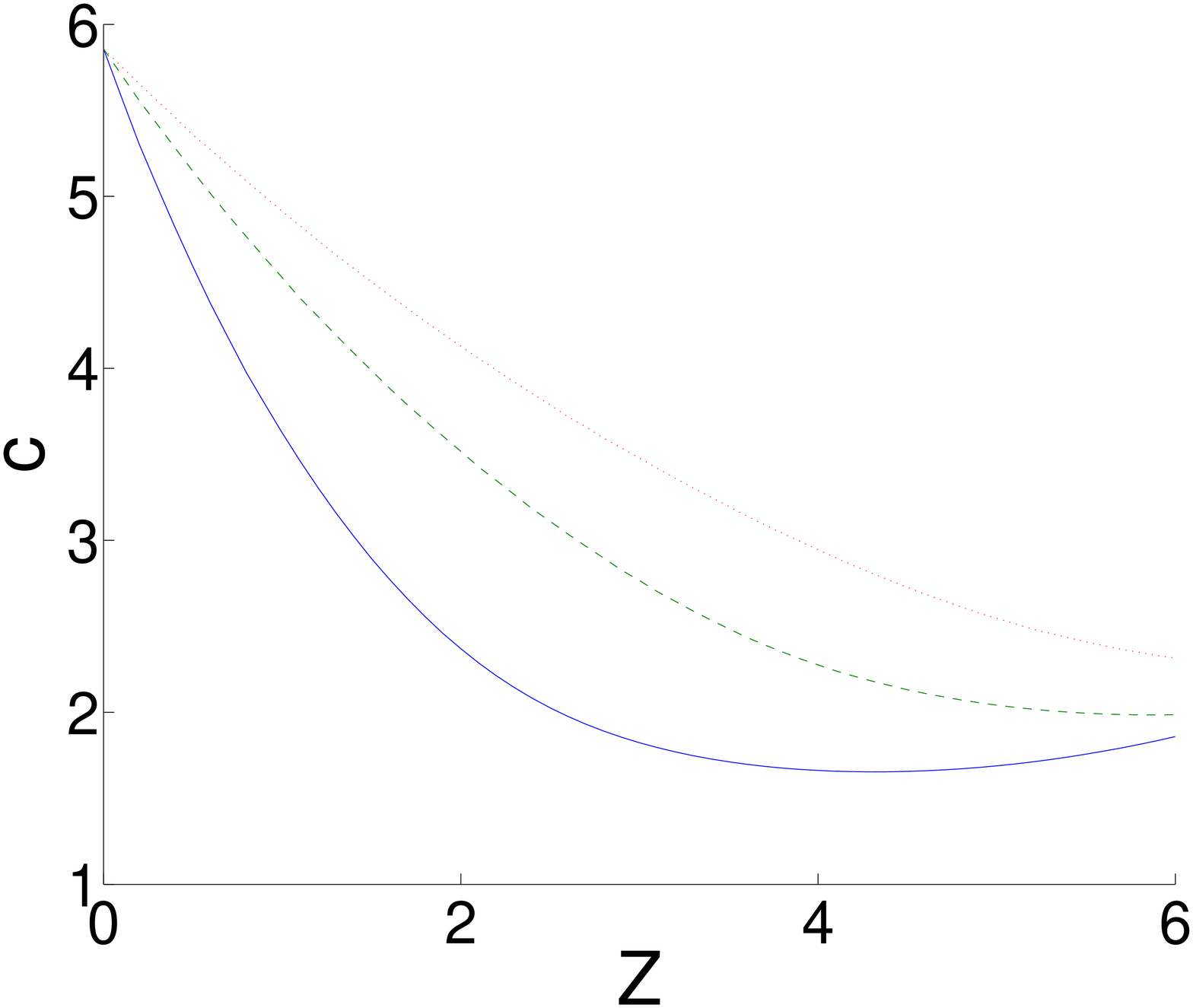}}  
\label{fig:subfigb}
}

\subfigure[]{
{\includegraphics[width=200pt]{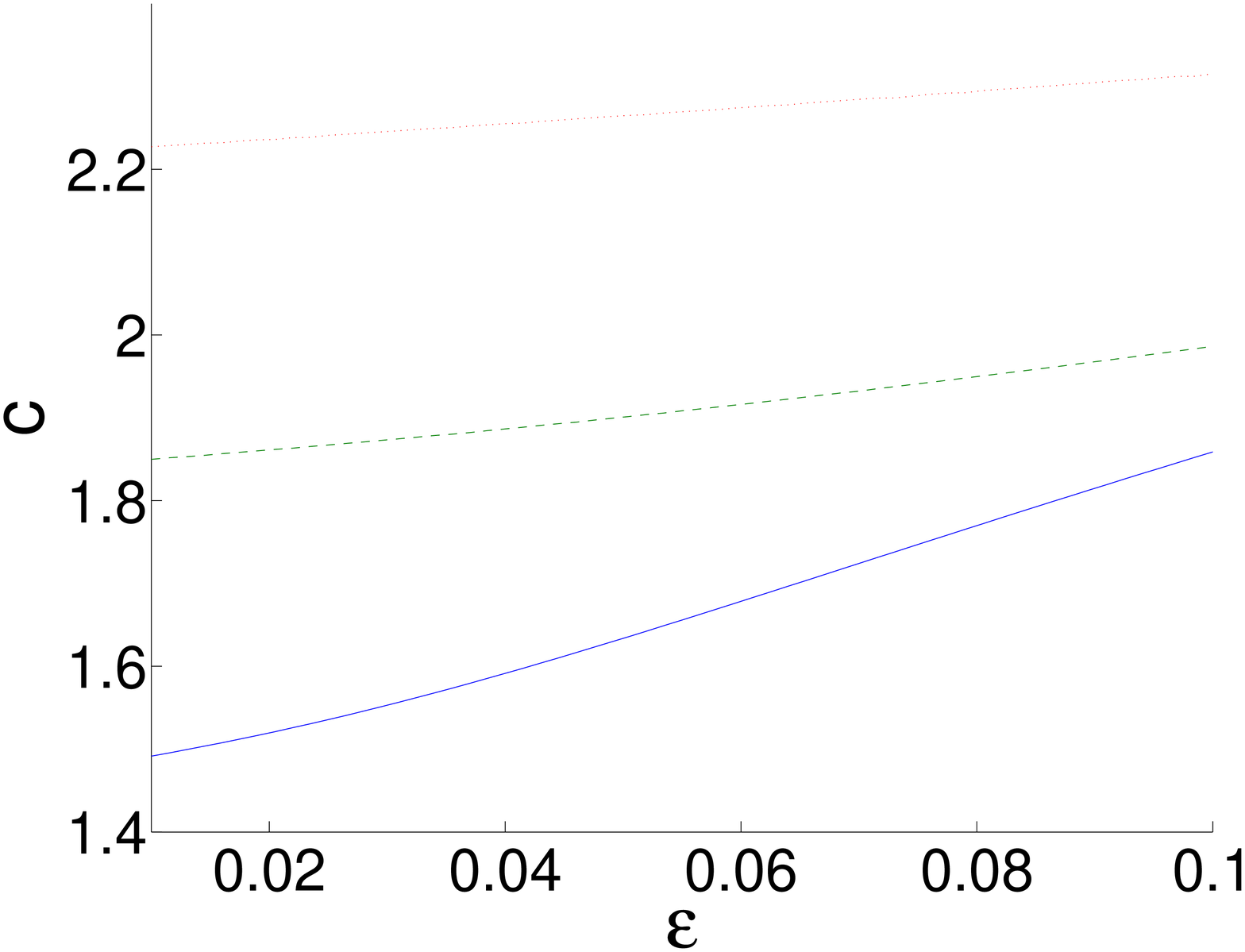}}  
\label{fig:subfigc}
}\subfigure[]{
{\includegraphics[width=200pt]{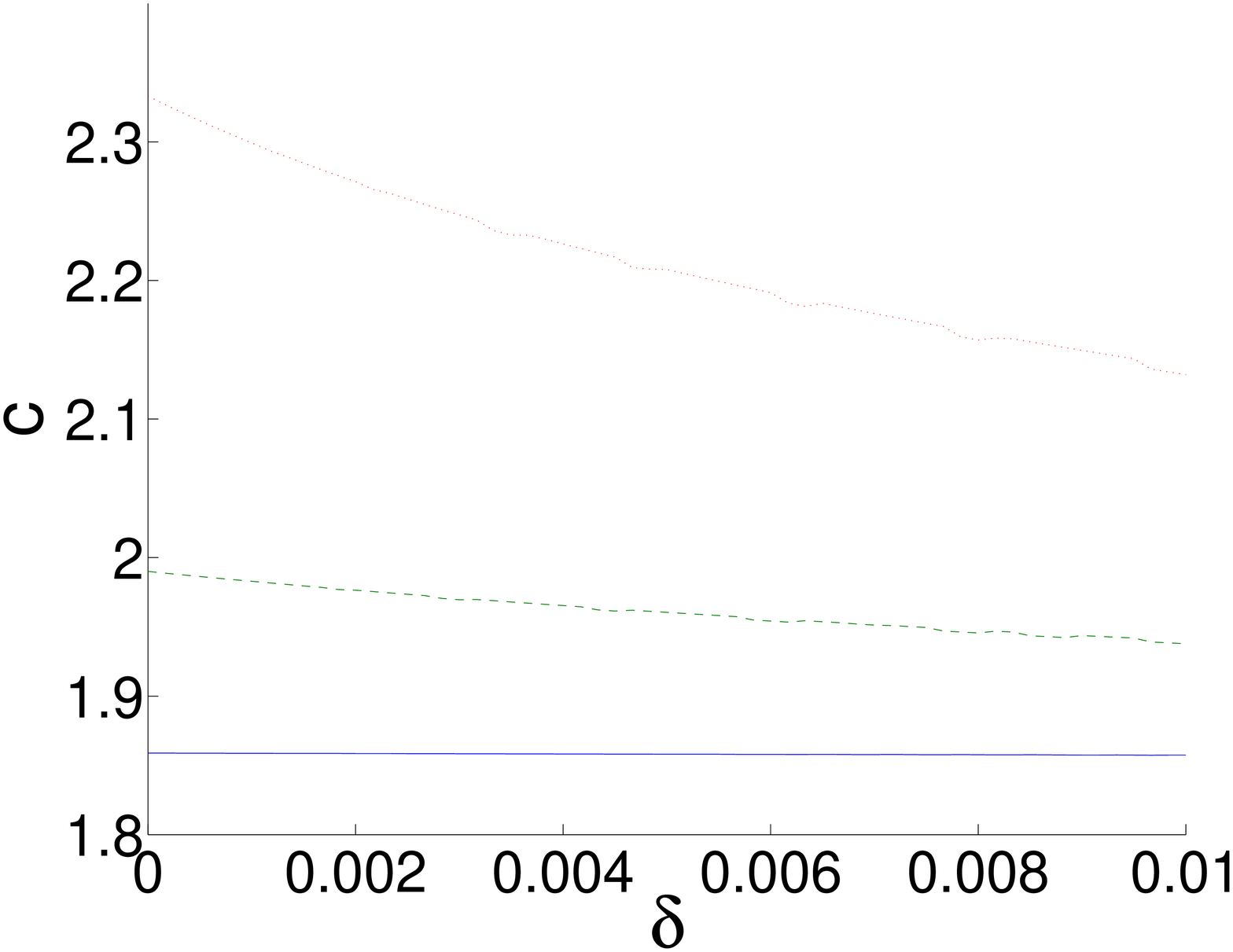}}  
\label{fig:subfigd}
}\caption{{{Plots showing the speed $c$ of the front as a function of the parameters (a) $h$ (with $\eps=0.1$, $Z=6$, $\delta=0.0005$, {and $T_{ign}=0.01$}), (b) $Z$ (with $\eps=0.1$, $h=0.3$, $\delta=0.0005$, {and $T_{ign}=0.01$}), (c) $\eps$ (with $h=0.3$, $Z=6$, $\delta=0.0005$, {and $T_{ign}=0.01$}), and (d) $\delta$ (with $\eps=0.1$, $h=0.3$, $Z=6$, and $T_{ign}=0.01$). {\cre{In each graph}}, the solid line corresponds to $\sigma=0.25$, the dashed line to $\sigma=0.5$, and the dotted line to $\sigma=0.75$. 
\label{SpeedFigure}
\vspace{-0.5cm}}}}
\end{figure}

\end{document}